\newtheorem{conjecture}{Conjecture}
\newtheorem{theorem}{Theorem}
\newtheorem{lemma}{Lemma}
\newtheorem{observation}{Observation}
\title{Polygons with Prescribed Angles in 2D and 3D\thanks{Research on this paper is supported, in part, by NSF grants CCF-1740858, CCF-1712119, DMS-1800734, and DMS-1839274.}}
\author{Alon Efrat, Radoslav Fulek, Stephen Kobourov, Csaba D. T\'{o}th}
\date{}
\begin{document}

\maketitle

\begin{abstract}
We consider the construction of a polygon $P$ with $n$ vertices whose turning angles at the vertices are given by a sequence $A=(\alpha_0,\ldots, \alpha_{n-1})$,  $\alpha_i\in (-\pi,\pi)$, 
for $i\in\{0,\ldots, n-1\}$.
The problem of realizing $A$ by a polygon can be seen as that of constructing a straight-line drawing of a graph with prescribed angles at vertices, and hence, it is a special case of the well studied problem of constructing an \emph{angle graph}.
In 2D, we characterize sequences $A$ for which every generic polygon $P\subset \mathbb{R}^2$ realizing $A$ has at least $c$ crossings, for every $c\in \mathbb{N}$, and describe an efficient algorithm that constructs, for a given sequence $A$, a generic polygon $P\subset \mathbb{R}^2$ that realizes $A$ with the minimum number of crossings.
In 3D, we describe an efficient algorithm that tests whether a given sequence $A$ can be realized by a (not necessarily generic) polygon $P\subset \mathbb{R}^3$, and for every realizable sequence the algorithm finds a realization.
 \end{abstract}

--------------------------------------------------

\section{Introduction}
\label{sec:intro}

Straight-line realizations of graphs with given metric properties have been one of the earliest applications of graph theory. Rigidity theory, for example, studies realizations of graphs with prescribed edge lengths, but also considers a mixed model
where the edges have prescribed lengths or directions~\cite{CJK20,JacksonJ10,JacksonK11a,JohnS09,SaliolaW04}.
In this paper, we extend research on the so-called \emph{angle graphs}, introduced by Vijayan~\cite{vijayan1986geometry} in the 1980s, which are geometric graphs with prescribed angles between adjacent edges. Angle graphs found applications in mesh flattening~\cite{zayer_etal:05}, and computation of conformal transformations~\cite{driscoll1998numerical,S99_crossratios} with applications in the theory of minimal surfaces and fluid dynamics.

Viyajan~\cite{vijayan1986geometry} characterized planar angle graphs under various constraints, including the case when the graph is a cycle~\cite[Theorem~2]{vijayan1986geometry} and when the graph is 2-connected~\cite[Theorem~3]{vijayan1986geometry}. In both cases,
the characterization leads to an efficient algorithm to find a
planar straight-line drawing or report that none exists.
Di Battista and Vismara~\cite{di1996angles} showed that for 3-connected angle graphs (e.g., a  triangulation), planarity testing reduces to solving a system of linear equations and inequalities in linear time. Garg~\cite{G98_anglegraphs} proved that planarity testing for angle graphs is NP-hard, disproving a conjecture by Viyajan. Bekos et
al.~\cite{BekosFK19} showed that the problem remains NP-hard even if
all angles are multiples of $\pi/4$.

The problem of computing (straight-line) realizations of angle graphs can be  seen as the problem of reconstructing a drawing of a graph from some given partial information. The research problems to decide if the given data uniquely determine the realization or its  parameters of interest are already interesting for cycles, and were previously considered in the areas of conformal transformations~\cite{S99_crossratios} and visibility graphs~\cite{dmw-11-pda}.

In 2D, we are concerned  with realizations of angle cycles as polygons minimizing the number of crossings which, as we shall see, depends only on the sum of the turning angles. It follows from the seminal work of Tutte~\cite{tutte1963draw} and Thomassen~\cite{thomassen1980planarity}  that every positive instance of a 3-connected planar angle graph admits a crossing-free realization if the prescribed angles yield convex faces. Convexity will also play a crucial role in our proofs. 

In 3D, we would like to determine whether a given angle cycle can be realized by a 
%(not necessarily generic) \csaba{"generic" is undefined at this point.}
polygon. Somewhat counter-intuitively, self-intersections cannot be always avoided in a polygon realizing the given angle cycle in 3D; we present examples below. Di Battista et al.~\cite{DiBattistaKLLW12} characterized oriented polygons that can be realized in $\mathbb{R}^3$ without self-intersections with axis-parallel edges of given directions. Patrignani~\cite{Patrignani08} showed that recognizing crossing-free realizability is NP-hard for graphs of maximum degree 6 in this setting.  

Throughout the paper we assume modulo $n$ arithmetic on the indices, and use $\langle .,.\rangle$ scalar product notation.

\paragraph{\bf Angle sequences in 2-space.} In the plane, an \emph{angle sequence} $A$ is a sequence $(\alpha_0,\ldots,\alpha_{n-1})$ of real numbers such that $\alpha_i\in(-\pi,\pi)$ for all $i\in \{0,\ldots, n-1\}$. Let $P\subset \mathbb{R}^2$ be an oriented polygon with $n$ vertices $v_0,\ldots ,v_{n-1}$ that appear in the given order along $P$, which is consistent with the given orientation of $P$.
The \emph{turning angle} of $P$ at $v_i$ is the angle in $(-\pi,\pi)$ between the vector $v_i-v_{i-1}$ and $v_{i+1}-v_i$. The sign of the angle is positive if a rotation of the plane that maps the vector $v_i-v_{i-1}$ to the positive direction of the $x$-axis, makes the $y$-coordinate of $v_{i+1}-v_i$ positive. Otherwise, the angle nonpositive; see Fig.~\ref{fig:turningangle}.

\begin{figure}[htb]
\centering
\includegraphics[scale=1]{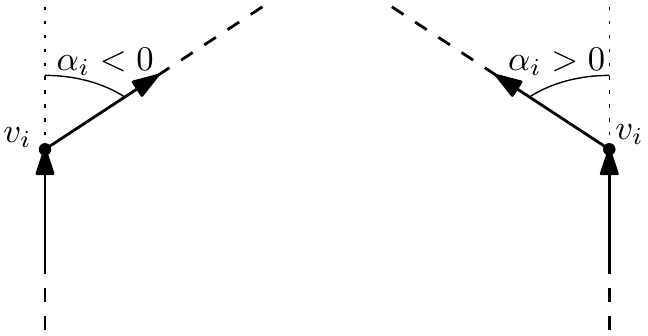}
\caption{A negative, or right, (on the left) and a positive, or left, (on the right) turning angle $\alpha_i$ at the vertex $v_i$ of an oriented polygon.}
\label{fig:turningangle}
\end{figure}

The oriented polygon $P$ \emph{realizes} the angle sequence $A$ if  the turning angle of $P$ at $v_i$ is equal to $\alpha_i$, for $i=0,\ldots, n-1$.
A polygon $P\subset \mathbb{R}^2$  is \emph{generic} if all its self-intersections  are transversal (that is, proper crossings), vertices of $P$ are distinct points, and no vertex of $P$ is contained in a relative interior of an edge of $P$. 
Following the terminology of Viyajan~\cite{vijayan1986geometry}, an
angle sequence $A=(\alpha_0,\ldots,\alpha_{n-1})$ is \emph{consistent} if there exists a generic polygon $P$ with $n$ vertices realizing $A$. For a polygon $P$ that realizes an angle sequence $A=(\alpha_0,\ldots,\alpha_{n-1})$ in the plane, 
the \emph{total curvature} of $P$ is $\mathrm{TC}(P)=\sum_{i=0}^{n-1}\alpha_i$,
and the \emph{turning number} (also known as \emph{rotation number}) of $P$ is $\mathrm{tn}(P)=\mathrm{TC}(P)/(2\pi)$,
where $\mathrm{tn}(P)\in \mathbb{Z}$~\cite{Sullivan2008}. Therefore a necessary condition for the consistency of an angle sequence is that $\sum_{i=0}^{n-1}\alpha_i \equiv 0 \pmod{2\pi}$. This condition is also sufficient except for the case when $\sum_{i=0}^{n-1}\alpha_i =0$.
We give a sufficient condition in all cases in the next paragraph to complete the characterization of consistent angle sequences.

Let $\beta_i=\sum_{j=0}^{i}\alpha_j \mod{2\pi}$, and let $\mathbf{u}_i\in \mathbb{R}^2$ be the unit vector $(\cos \beta_i, \sin \beta_i)$ for $i=0,\ldots , n-1$. 
%\csaba{Changed \mathbf{v}_i to \mathbf{u}_i to avoid confusion with vertices.}
% Hence, $\mathbf{v}_i$ is the direction vector of the $(i+1)$-st edge of an oriented polygon $P$ realizing $A$ if the direction vector of the first edge of $P$ is $(1,0)\in \mathbb{R}^2$. 
As observed by Garg~\cite[Section~6]{G98_anglegraphs}, $A$ is consistent if and only if $\sum_{i=0}^{n-1}\alpha_i \equiv 0 \pmod{2\pi}$ and $\mathbf{0}$ is a strictly positive convex combination of vectors $\mathbf{u}_i$, that is, there exist scalars $\lambda_0,\ldots, \lambda_{n-1}> 0$ such that $\sum_{i=0}^{n-1}\lambda\mathbf{u}_i=\mathbf{0}$ and $\sum_{i=0}^{n-1}\lambda_i=1$. We use this characterization, in the proof of Theorem~\ref{thm:crossings_number} stated below.

The \emph{crossing number}, denoted by $\mathrm{cr}(P)$, of a generic polygon is the number of self-crossings of $P$. The \emph{crossing number} of a consistent angle sequence $A$ is the minimum integer $c$, denoted by $\mathrm{cr}(A)$, such that there exists a generic polygon $P\in \mathbb{R}^2$ realizing $A$ with $\mathrm{cr}(P)=c$. Our first main results is the following theorem.

\begin{restatable}{theorem}{CrossingNumber}
\label{thm:crossings_number}
For a consistent angle sequence $A=(\alpha_0,\ldots,\alpha_{n-1})$ in the plane, we have
\[
\mathrm{cr}(A)
 = \begin{cases}
    1       & \text{ if } \sum_{i=0}^{n-1}\alpha_i=0, \\
    |k|-1   & \text{ if } \sum_{i=0}^{n-1}\alpha_i=2k\pi \text{ and }k\neq 0.
           \end{cases}
\]
\end{restatable}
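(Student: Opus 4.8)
The plan is to prove matching lower and upper bounds, where the upper bound (the construction) is the real content of the statement. Throughout I would first reduce to the case $k\ge 0$: reversing the orientation of any realizing polygon negates every turning angle, hence negates the turning number, while leaving the set of self-crossings unchanged; so we may assume $\sum_i\alpha_i=2k\pi$ with $k\ge 0$.

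For the lower bound, note that any generic polygon $P$ realizing $A$ is, after rounding its corners, a generic closed plane curve whose turning number equals $\mathrm{tn}(P)=k$. By a classical theorem of Whitney, a generic closed plane curve of turning number $k$ has at least $|k|-1$ self-crossings when $k\neq 0$; and when $k=0$ it has at least one, since a crossing-free closed curve is simple and hence has turning number $\pm 1$ by Hopf's Umlaufsatz. This yields $\mathrm{cr}(A)\ge|k|-1$ for $k\neq 0$ and $\mathrm{cr}(A)\ge 1$ for $k=0$, so I would invoke it as a cited lemma rather than reprove it.

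For the upper bound I would use Garg's convex-combination characterization quoted above. Consistency gives positive lengths $\lambda_i$ with $\sum_i\lambda_i\mathbf u_i=\mathbf 0$, and then the partial sums $v_j=\sum_{i<j}\lambda_i\mathbf u_i$ form a polygon realizing $A$; the freedom left to me is the choice of the $\lambda_i>0$ subject to this single closure equation. The goal is to choose them so that the polygon matches a \emph{model curve}: for $k\ge 1$, a chain of $k$ simple convex loops in which consecutive loops meet in exactly one transversal crossing (total turning $2k\pi$, exactly $k-1$ crossings); for $k=0$, a figure-eight whose two lobes carry turning numbers $+1$ and $-1$. Since for $k\ge 1$ the cumulative angle $\theta_i=\sum_{j\le i}\alpha_j$ climbs from below $2\pi$ to $2k\pi$ in steps of size less than $\pi$, it must cross each level $2\pi m$ with $1\le m\le k-1$; I would cut the direction sequence at these crossings into $k$ blocks of net winding $2\pi$ each, and then concentrate most of the total length in an angularly sorted subsequence inside each block so that each block closes up into a nearly convex loop.

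The hard part will be controlling the crossings forced by the \emph{negative} turning angles. Even for $k=1$ the sequence may have many descents, so the directions are far from sorted; and because the cyclic order of the edges is fixed by $A$, the naive remedy of drawing them in sorted (convex) order is unavailable. The crux is to prove that such reflex ``detours'' can always be absorbed without manufacturing extra crossings — concretely, by selecting inside each block an angularly sorted subsequence of directions that still positively spans $\mathbf 0$, making those edges dominate the length budget so that $P$ is a small perturbation of the convex $k$-fold model, and then checking by a careful but finite case analysis that the remaining short edges create no new crossings and that the drawing is generic. Carrying out this absorption cleanly, and pinning the crossing count to exactly $|k|-1$ (respectively $1$), is where the real effort lies; the $k=0$ figure-eight case would follow from the same length-domination idea applied to the two-block split at the point where $\theta_i$ turns back toward its starting level.
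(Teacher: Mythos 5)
Your lower bound is fine: Whitney's inequality for generic closed curves (at least $|k|-1$ crossings for turning number $k\neq 0$, at least one crossing for $k=0$ since a simple closed polygon has turning number $\pm 1$) is exactly what the paper proves from scratch in Lemma~\ref{lemma:0sum} by splitting the polygon at a self-crossing and inducting, citing Vijayan for the $k=0$ case. The problem is the upper bound. You correctly identify that absorbing the negative turning angles without manufacturing extra crossings is ``where the real effort lies,'' but that step \emph{is} the theorem; the all-positive case is easy (the paper's spiral construction in Lemma~\ref{lemma:0sum2}), and your proposal stops exactly where the difficulty begins. Moreover, the specific plan you sketch has a concrete obstruction: within a single block of net winding $2\pi$ the direction vectors $\mathbf{v}_i$ occurring in that block need not positively span $\mathbf{0}$ --- Garg's condition only guarantees this for the \emph{full} cyclic sequence --- so there may be no angularly sorted subsequence inside a block that closes up into a near-convex loop, and no assignment of lengths makes that block a small perturbation of a convex model loop. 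Length domination also does not by itself control a short reflex detour that lands at the junction between two consecutive model loops.

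The paper's route around this is different and is worth comparing. It inducts on $n$ rather than on a global model curve. A sign change $s$ (an index with $\alpha_s\alpha_{s+1}<0$) is called \emph{essential} if removing $\mathbf{v}_s$ destroys the property that $\mathbf{0}$ is a strictly positive combination of the direction vectors. If some sign change is nonessential, the two angles $\alpha_s$ and $\alpha_{s+1}$ are merged into one, the shorter sequence is still consistent and is realized by induction with $\bigl||k|-1\bigr|$ crossings, and the missing vertex is re-inserted \emph{locally} by splicing a small triangle (or a small trapezoid when $\alpha_s+\alpha_{s+1}=0$) into a neighborhood of the corresponding vertex, scaled so that it misses every other edge; this is the mechanism that guarantees no new crossings, replacing your unproved ``absorption'' step. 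The base cases are then: no sign changes at all (the spiral), or all sign changes essential, in which case the paper shows the structure is forced --- either $\sum_i\alpha_i\neq 0$ and in fact no sign change can be essential, or $\sum_i\alpha_i=0$ and there are exactly two sign changes, the directions split into two convex chains, and a one-crossing realization is immediate. To make your proof work you would need either to adopt this reduction or to supply a genuine argument for the block-absorption step, including the case where a block's directions all lie in a halfplane.
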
%{theorem}

The proof of Theorem~\ref{thm:crossings_number} can be easily converted into a weakly linear-time algorithm that constructs, for a given consistent sequence $A$, a generic polygon $P\subset \mathbb{R}^2$ that realizes $A$ with the minimum number of crossings.

\paragraph{\bf Angle sequences in 3-space and spherical polygonal linkages.}
In $\mathbb{R}^d$, $d\geq 3$, the sign of a turning angle no longer plays a role: The \emph{turning angle} of an oriented polygon $P$ at $v_i$ is in $(0,\pi)$, and an angle sequence $A=(\alpha_0,\ldots,\alpha_{n-1})$ is in $(0,\pi)^n$. The unit-length direction vectors of the edges of $P$ determine a spherical polygon $P'$ in $\mathbb{S}^{d-1}$. 
Note that the turning angles of $P$ correspond to the spherical lengths of the segments of $P'$. It is not hard to see that this observation reduces the problem of realizability of $A$ by a polygon in $\mathbb{R}^d$ to the problem of realizability of $A$  by a spherical polygon in $\mathbb{S}^{d-1}$, in the sense defined below, that additionally contains the origin $\mathbf{0}$ in the interior of its convex hull.

Let $\mathbb{S}^2\subset \mathbb{R}^3$ denote the unit 2-sphere.
A \emph{great circle} $C\subset \mathbb{S}^2$ is the intersection of $\mathbb{S}^2$
with a 2-dimensional hyperplane in $\mathbb{R}^3$ containing $\mathbf{0}$. A \emph{spherical line segment} is a connected subset of a great circle that does not contain a pair of antipodal points of $\mathbb{S}^2$.
The \emph{length} of a spherical line segment $ab$ equals the measure of the central angle subtended by $ab$. A \emph{spherical polygon} $P\subset \mathbb{S}^2$ is a closed  curve consisting of finitely many spherical segments; and a spherical polygon $P=(\mathbf{u}_0,\ldots, \mathbf{u}_{n-1})$,  $\mathbf{u}_i\in \mathbb{S}^2$, realizes an angle sequence $A=(\alpha_0,\ldots,\alpha_{n-1})$ if the spherical segment $(\mathbf{u}_{i-1},\mathbf{u}_{i})$ has (spherical) length $\alpha_i$, for $i=0,\ldots , n-1$. As usual, the \emph{turning angle} of $P$ 
at $\mathbf{u}_i$ is the angle in $[0,\pi]$ between  the tangents to $\mathbb{S}^2$ at  $\mathbf{u}_i$ that are co-planar with the great circles containing $(\mathbf{u}_i,\mathbf{u}_{i+1})$ and $(\mathbf{u}_i,\mathbf{u}_{i-1})$. Unlike for polygons in $\mathbb{R}^2$ and $\mathbb{R}^3$, we do not put any constraints on turning angles of spherical polygons %in our results 
(i.e., angles $0$ and $\pi$ are allowed).

Regarding realizations of $A$ by spherical polygons, we prove the following.

\begin{theorem}
\label{thm:3d_realization_char}
Let $A=(\alpha_0,\ldots,\alpha_{n-1})$, $n\ge 3$, be an angle sequence.
There exists a 
%\csaba{Can we prove genericity?} \rado{This depends on how we define a generic polygon in 3D. We did this only in 2D. I think  we can omit the adjective ``generic'' which I've done.}
polygon $P\subset \mathbb{R}^3$ realizing $A$ if and only if $\sum_{i=0}^{n-1}\alpha_i\geq 2\pi$ and there exists a spherical polygon $P'\subset \mathbb{S}^2$ realizing $A$. Furthermore, $P$ can be constructed efficiently if $P'$ is given.
\end{theorem}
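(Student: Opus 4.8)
The plan is to pass through the sphere of edge directions. Writing the vertices of a candidate polygon $P\subset\mathbb{R}^3$ as $v_0,\dots,v_{n-1}$ and its edge vectors as $v_{i+1}-v_i=\ell_i\mathbf{u}_i$ with $\ell_i>0$ and $\mathbf{u}_i\in\mathbb{S}^2$, the turning angle at $v_i$ is exactly the spherical distance between $\mathbf{u}_{i-1}$ and $\mathbf{u}_i$. Hence $P$ realizes $A$ if and only if the spherical polygon $P'=(\mathbf{u}_0,\dots,\mathbf{u}_{n-1})$ realizes $A$; and $P$ closes up, $\sum_i\ell_i\mathbf{u}_i=\mathbf{0}$ with all $\ell_i>0$, if and only if $\mathbf{0}$ is a strictly positive convex combination of the $\mathbf{u}_i$, i.e.\ $\mathbf{0}$ lies in the relative interior of $\mathrm{conv}\{\mathbf{u}_0,\dots,\mathbf{u}_{n-1}\}$. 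So the first step reduces the theorem to the equivalent statement that $A$ is realizable in $\mathbb{R}^3$ iff \emph{some} spherical polygon realizing $A$ has $\mathbf{0}$ in the relative interior of the convex hull of its vertices. The passage from such a $P'$ back to $P$ is the promised efficient construction: recovering positive $\ell_i$ with $\sum_i\ell_i\mathbf{u}_i=\mathbf{0}$ is a feasibility linear program, solvable precisely because $\mathbf{0}$ is a strictly positive combination, after which I set $v_{i+1}=v_i+\ell_i\mathbf{u}_i$.

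For necessity of the two conditions, the existence of a spherical polygon realizing $A$ is immediate (take $P'$). For the bound $\sum_i\alpha_i\ge 2\pi$, I use that $\mathbf{0}\in\mathrm{conv}\{\mathbf{u}_i\}$ forces the vertices $\mathbf{u}_i$, and hence the closed curve $P'$ that contains them, to lie in no open hemisphere. A classical hemisphere lemma---any closed spherical curve of length less than $2\pi$ lies in an open hemisphere---then gives $\sum_i\alpha_i=\mathrm{length}(P')\ge 2\pi$ by contraposition.

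The substance of the theorem, and the main obstacle, is sufficiency: assuming $\sum_i\alpha_i\ge 2\pi$ and that some spherical polygon realizing $A$ exists, I must produce one whose vertices have $\mathbf{0}$ in the relative interior of their convex hull. The equality case $\sum_i\alpha_i=2\pi$ I handle by an explicit construction: place $\mathbf{u}_i$ on a fixed great circle at the partial sum $\sum_{j\le i}\alpha_j$ (mod $2\pi$); consecutive spherical distances are then the $\alpha_i$ (each lies in $(0,\pi)$, so the minor arc is the correct one), and since the arcs wind around the circle exactly once with every gap smaller than $\pi$, the center $\mathbf{0}$ lies in the relative interior of the planar polygon they span. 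The strict case $\sum_i\alpha_i>2\pi$ is where the real work lies: here the \emph{given} realization may be trapped in a closed hemisphere, so I plan to reconfigure it within the configuration space of the closed spherical linkage with fixed arc lengths $(\alpha_i)$---whose nonemptiness is exactly the hypothesis that $A$ admits a spherical realization---using length-preserving moves, namely rigid rotations of subchains about the axis through their two endpoints.

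The hard part is to certify that such motions can always escape every open hemisphere. I expect to argue this variationally: among all realizations of $A$, take one minimizing a hemisphere defect $\min$-over-poles measuring how deeply the polygon sits inside a single open hemisphere, and show that at the optimum this defect is nonpositive. The length hypothesis is what prevents the polygon from being squeezed into an open hemisphere---if the optimal configuration still lay strictly inside one, a local rotation transferring arc length across the bounding great circle would decrease the defect, and the only configuration that blocks such a move is the extremal great-circle one, which has length exactly $2\pi$ and is therefore excluded once $\sum_i\alpha_i>2\pi$. Making this compactness-plus-perturbation argument rigorous, and turning it into the claimed efficient algorithm rather than a mere existence proof, is the delicate step. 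Once a realization lying in no open hemisphere is in hand, $\mathbf{0}\in\mathrm{conv}\{\mathbf{u}_i\}$ follows; a final generic perturbation upgrades this to the relative interior, and the linear-programming recovery of the $\ell_i$ from the first step then outputs $P$.
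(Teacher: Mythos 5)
Your reduction to the sphere of edge directions is correct and coincides with the paper's Lemma~\ref{lemma:3d_realization_char}: realizability of $A$ in $\mathbb{R}^3$ is equivalent to the existence of a spherical realization whose vertex set has $\mathbf{0}$ in the relative interior of its convex hull, and the edge lengths $\ell_i$ are recovered from a strictly positive convex combination (the paper uses a Carath\'eodory-plus-centroid trick where you invoke an LP, a minor difference). The necessity of $\sum_i\alpha_i\ge 2\pi$ via the hemisphere lemma is fine, and your great-circle construction settles the equality case.

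The gap is the case $\sum_i\alpha_i>2\pi$, which you yourself flag as ``the delicate step'': what you offer there is a plan, not a proof. The hemisphere-defect functional is never defined, and, more importantly, you do not justify the central claim that at a minimizer still confined to a hemisphere there is always ``a local rotation transferring arc length across the bounding great circle'' that decreases the defect. That is precisely where the difficulty lives: rotations of a subchain about the axis through its two endpoints can be blocked, or fail to improve anything, at degenerate local configurations --- turning angles equal to $0$ or $\pi$ (spurs), consecutive spurs (crimps), and subpaths lying in the bounding great circle that contain a half-circle --- and your assertion that ``the only configuration that blocks such a move is the extremal great-circle one'' is exactly the statement requiring proof. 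The paper's Lemma~\ref{lemma:convex-hull} devotes a multi-case induction on $n$ to these obstructions, using Observation~\ref{obs:perturb} and Lemma~\ref{lemma:straigthening} to drive a turning angle to $0$ or $\pi$, then suppressing or folding a vertex to shorten the sequence, with separate treatment of half-circles, spurs, and crimps. Without an argument of comparable substance the sufficiency direction, and hence the theorem, is not established; note also that even a completed compactness argument would yield existence but not the claimed efficient construction, whereas the paper's induction is explicitly algorithmic.
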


%\csaba{Does this characterization generalize to $\mathbb{R}^d$ for $d\geq 3$ (e.g., by induction on $d$?} 
%\rado{To me it seems that working in  $\mathbb{R}^3$ is as general as in  $\mathbb{R}^d$. The algorithm in Theorem 2 should be the same on any sphere and hence we can just assume that there exists a realization on $S^2$ if there exists a realization on $S^d$ for $d>2$, but maybe I am missing something.}
%\csaba{If that's the case, it is significant and we should state the result in $d$-space at the end of Section~\ref{sec:3D}.}

\begin{restatable}{theorem}{RealizationChar}
\label{thm:3d_realization_sphere}
There exists a constructive weakly polynomial-time algorithm to test whether  a given angle sequence $A=(\alpha_0,\ldots,\alpha_{n-1})$
can be realized by a spherical polygon $P'\subset \mathbb{S}^2$.
\end{restatable}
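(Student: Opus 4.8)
The plan is to reduce realizability to a one‑dimensional reachability computation along the chain. Fix $\mathbf{u}_0\in\mathbb{S}^2$ arbitrarily (spherical isometries act transitively, so this is without loss of generality) and read the spherical polygon as an open chain $\mathbf{u}_0,\mathbf{u}_1,\ldots,\mathbf{u}_{n-1}$ with prescribed consecutive distances $d(\mathbf{u}_{k-1},\mathbf{u}_k)=\alpha_k$ for $k=1,\ldots,n-1$, together with the single closing constraint $d(\mathbf{u}_{n-1},\mathbf{u}_0)=\alpha_0$. For each $k$ let $D_k\subseteq[0,\pi]$ be the set of values that the spherical distance $d(\mathbf{u}_0,\mathbf{u}_k)$ can attain over all placements of the initial chain. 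Then $A$ is realizable by a spherical polygon if and only if $\alpha_0\in D_{n-1}$ (the closing segment of length $\alpha_0<\pi$ then exists and is not antipodal), so it suffices to compute $D_{n-1}$.

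The key structural claim is that each $D_k$ is a closed interval $[m_k,M_k]$ and that appending one edge transforms it by an explicit rule. Here I would use the exact solvability condition for a spherical triangle: three lengths $a,b,c\in[0,\pi]$ are the sides of a (possibly degenerate) spherical triangle exactly when $|a-b|\le c\le \min(a+b,\,2\pi-a-b)$, the boundary equalities corresponding to collinear placements on a common great circle, which are legitimate since we allow turning angles $0$ and $\pi$. Applying this with $a=d(\mathbf{u}_0,\mathbf{u}_{k-1})$, $b=\alpha_k$, $c=d(\mathbf{u}_0,\mathbf{u}_k)$ shows that from a single value $d\in D_{k-1}$ the reachable set is the interval $[\,|d-\alpha_k|,\ \min(d+\alpha_k,\,2\pi-d-\alpha_k,\,\pi)\,]$, whose endpoints vary continuously with $d$. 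Taking the union over $d\in[m_{k-1},M_{k-1}]$ yields, by a short connectedness argument (a value strictly between the extreme endpoints that was omitted would split the connected parameter interval into two disjoint nonempty open sets), again an interval, namely
\[
m_k=\max\bigl(0,\ m_{k-1}-\alpha_k,\ \alpha_k-M_{k-1}\bigr),\qquad
M_k=\max_{d\in[m_{k-1},M_{k-1}]}\min\bigl(d+\alpha_k,\,2\pi-d-\alpha_k,\,\pi\bigr),
\]
with base case $[m_1,M_1]=[\alpha_1,\alpha_1]$. The right-hand side of the $M_k$ expression is the maximum of a concave tent function over an interval and is evaluated in $O(1)$ time. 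Notably there is no simple closed form valid for all $k$: once the interval saturates to $[0,\pi]$ after enough ``winding'' a large total length no longer obstructs closure, which is exactly why an iterative computation is natural.

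With the recursion in hand the algorithm computes $[m_1,M_1],\ldots,[m_{n-1},M_{n-1}]$ in $n-2$ rounds of $O(1)$ arithmetic on quantities of the form $c\pi+\sum_i\pm\alpha_i$, and reports realizability iff $\alpha_0\in[m_{n-1},M_{n-1}]$; since the bit-length of the manipulated numbers stays polynomial in the input encoding, this is weakly polynomial. To make the procedure constructive I would backtrack through the recursion: set $d_{n-1}=\alpha_0$, and for $k=n-1$ down to $1$ choose $d_{k-1}\in[m_{k-1},M_{k-1}]$ with $d_k\in[\,|d_{k-1}-\alpha_k|,\ \min(d_{k-1}+\alpha_k,\,2\pi-d_{k-1}-\alpha_k,\,\pi)\,]$, which exists precisely because $d_k\in D_k$ arose as such a union; finally realize each consecutive spherical triangle explicitly to output coordinates $\mathbf{u}_0,\ldots,\mathbf{u}_{n-1}$. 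The main obstacle I anticipate is proving the append-edge lemma rigorously — in particular pinning down the exact closed-interval range of the third side of a spherical triangle (handling the $2\pi-a-b$ upper constraint and the antipodal/degenerate boundary cases) and verifying connectedness of the union. Once this lemma is established, the reduction to reachability, the linear-time dynamic program, and the backtracking construction are routine.
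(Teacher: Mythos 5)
Your proposal is correct and follows essentially the same route as the paper: the paper fixes the starting vertex at the north pole and tracks the set $U_i$ of reachable endpoints, which by rotational symmetry is a spherical zone encoded by an interval $I_i\subset[0,\pi]$ of polar angles (i.e., distances from the start), updated by exactly your $\oplus/\ominus$ rule $[\,|d-\alpha|,\ \min(d+\alpha,\,2\pi-d-\alpha)\,]$ and the same connectedness-of-a-continuous-union argument, followed by the same backtracking construction. The only difference is cosmetic (the paper closes the loop at the north pole, testing $0\in I_{n-1}$, whereas you test $\alpha_0\in D_{n-1}$ for the open chain).
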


A simple exponential-time algorithm for realizability of angle sequences by spherical polygons follows from a known characterization~\cite[Theorem 2.5]{BS02_spherical}, which also implies that the order of angles in $A$ does not matter for the spherical realizability.
The topology of the configuration spaces of spherical polygonal linkages have also been studied~\cite{KM1999}. Independently, 
Streinu et al.~\cite{PaninaS10,StreinuW04} showed that the configuration space of \emph{noncrossing} spherical linkages is connected if $\sum_{i=0}^{n-1}\alpha_i\leq 2\pi$. However, these results do not seem to help prove Theorem~\ref{thm:3d_realization_sphere}.

The combination of Theorems~\ref{thm:3d_realization_char} and~\ref{thm:3d_realization_sphere} yields our second main result.

\begin{theorem}
\label{thm:3d_realization}
There exists a constructive weakly polynomial-time algorithm to test whether a given angle sequence $A=(\alpha_0,\ldots,\alpha_{n-1})$ can be realized by a polygon $P\subset \mathbb{R}^3$.
%that is not necessarily generic.
\end{theorem}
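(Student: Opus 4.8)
The plan is to obtain the algorithm by simply chaining together the two preceding theorems, since Theorem~\ref{thm:3d_realization_char} already reduces spatial realizability to a pair of conditions that are each easy to decide. Concretely, by Theorem~\ref{thm:3d_realization_char} the sequence $A$ is realizable by a polygon $P\subset\mathbb{R}^3$ if and only if two conditions hold simultaneously: (i) the angle sum satisfies $\sum_{i=0}^{n-1}\alpha_i\geq 2\pi$, and (ii) $A$ is realizable by a spherical polygon $P'\subset\mathbb{S}^2$. The algorithm therefore tests each condition separately and answers ``yes'' exactly when both succeed.

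First I would evaluate condition (i). This requires only summing the $n$ given angles and comparing the result to $2\pi$, which costs $O(n)$ arithmetic operations and is trivially within the weakly polynomial budget. Next I would invoke the algorithm of Theorem~\ref{thm:3d_realization_sphere} to decide condition (ii); by that theorem the test runs in weakly polynomial time, and when it reports success it also outputs an explicit spherical polygon $P'$ realizing $A$.

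For the constructive part, if both tests pass, then the witness $P'$ produced in the previous step is fed into the efficient construction guaranteed by the ``furthermore'' clause of Theorem~\ref{thm:3d_realization_char}, which converts $P'$ into a polygon $P\subset\mathbb{R}^3$ realizing $A$. Composing the three steps yields a constructive procedure whose running time is the sum of three weakly polynomial-time subroutines, hence weakly polynomial, which establishes the theorem.

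I do not anticipate a genuine obstacle in this argument: all of the mathematical content has already been discharged in Theorems~\ref{thm:3d_realization_char} and~\ref{thm:3d_realization_sphere}, and what remains is only to verify that the conjunction of (i) and (ii) is exactly the realizability criterion and that piping the spherical witness into the construction of Theorem~\ref{thm:3d_realization_char} incurs no hidden cost. The one point I would double-check is the bookkeeping about the model of computation, namely that ``weakly polynomial'' is preserved under this composition and that the numerical output of the spherical algorithm is in a format the $\mathbb{R}^3$ construction can consume directly.
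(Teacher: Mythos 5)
Your proposal is correct and is exactly the paper's argument: the authors state explicitly that Theorem~\ref{thm:3d_realization} follows by combining Theorem~\ref{thm:3d_realization_char} (the characterization via the angle-sum condition and spherical realizability, with its constructive ``furthermore'' clause) with the decision algorithm of Theorem~\ref{thm:3d_realization_sphere}. Your caveat about the model of computation is also the right one to flag --- the paper itself notes that the running time is only weakly polynomial because the bit-length of the intermediate numerical values in the spherical construction is not controlled.
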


Our methods directly generalize from $\mathbb{R}^3$ to $\mathbb{R}^d$ for any integer $d\geq 3$. 
It turns out that higher dimensions do not translate to more realizable angle sequences. In particular, an angle sequence is realizable by a polygon in $\mathbb{R}^d$, $d\geq 3$, if and only if it is realizable in $\mathbb{R}^3$. 
%Furthermore, the same is true for self-intersection free realizations. 
We restrict ourselves to 2D and 3D in this paper. 

\paragraph{Organization.} We prove Theorem~\ref{thm:crossings_number} in Section~\ref{sec:2D} and
Theorems~\ref{thm:3d_realization_char},~\ref{thm:3d_realization_sphere},
and~\ref{thm:3d_realization} in Section~\ref{sec:3D}.
We show in Section~\ref{sec:3Dcross} that self-intersections are unavoidable in 3D if all realizations of an angle sequence are 2-dimensional.
We finish with concluding remarks in Section~\ref{sec:conclusion}.

% A \emph{rotation} at a vertex $v$ in a graph $G$ is a cyclic order of the edges of $G$ incident to $v$.
% A \emph{rotation system} of a graph is a set of rotations at its vertices.
% An \emph{angle graph} is a triplet $(G,R,A)$ consisting of a graph $G=(V,E)$, a rotation system $R$ of $G$, and an assignment of angles $A$ in $(0,2\pi)$ for every pair of consecutive edges in a rotation in $R$.
% A drawing of an angle graph $(G,R,A)$  \emph{respects} its rotation system if the ccw.\ order of the small pieces of the edges at every vertex $v$ in the drawing is the same as  the rotation of $v$ in $R$.
% An angle graph is \emph{plane} if  its given rotation corresponds to an embedding of the graph in the plane.
% A straight-line drawing of an {angle graph} $(G,R,A)$  in the plane is \emph{angle preserving} if it respects the rotation system $R$ and the angle between every pair of consecutive edges in a rotation is equal to its value assigned by $A$ if measured in the ccw.\ sense. 
% %
% As noted above, Di Battista and Vismara~\cite{di1996angles} showed that for 3-connected angle  graphs (e.g., triangulations), there is an efficient algorithm to test whether they admit a  straight-line embedding in the plane. We extend this results to plane graphs in which every  face has at most 4 vertices.

% \begin{theorem}
% \label{thm:quadrangulations}
% There exists a constructive polynomial time algorithm to test whether a given plane angle graph $(G,R,A)$ with faces of size at most 4 has an angle preserving embedding.
% \end{theorem}

% \section{Preliminaries}
% \label{sec:prelim}

\section{Crossing Minimization in the Plane}
\label{sec:2D}

The first part of the following lemma gives a folklore necessary condition for the consistency of an angle sequence $A$ in the plane. The condition is also sufficient except when $k=0$.
The second part  follows from a result of Gr\"unbaum and Shepard~\cite[Theorem~6]{grunbaum1990rotation}, using a decomposition due to  Wiener~\cite{Wiener64}. 
 We provide a proof for the sake of completeness.

\begin{figure}[ht]
\centering
\includegraphics[scale=1]{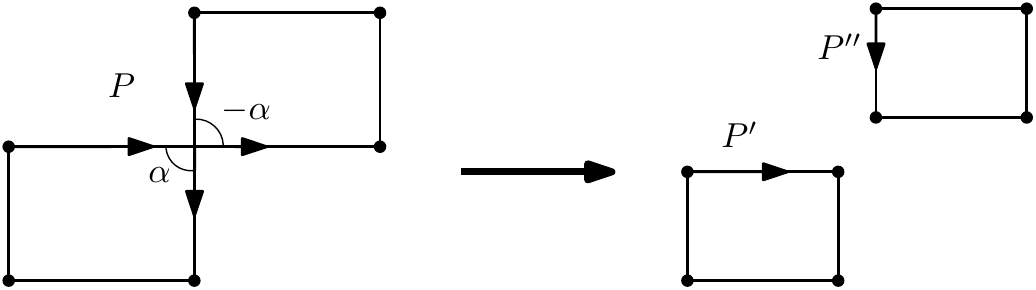}
\caption{Splitting an oriented closed polygon $P$ at a self-crossing point into 2 oriented closed polygons $P'$ and $P''$ such that $\mathrm{tn}(P)=\mathrm{tn}(P')+\mathrm{tn}(P'')$.}
\label{fig:splittingpolygon}
\end{figure}

\begin{lemma}
\label{lemma:0sum}
If an angle sequence $A=(\alpha_0,\ldots,\alpha_{n-1})$ is consistent, 
then $\sum_{i=0}^{n-1}\alpha_i=2k\pi$ for some $k\in \mathbb{Z}$,
and %.Furthermore, if $k\neq 0$ then 
%\csaba{the conclusion trivially holds for k=0.}
$\mathrm{cr}(A)\ge |k|-1$.
\end{lemma}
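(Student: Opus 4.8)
The plan is to prove both parts through the turning number. For the first part, any generic polygon $P$ realizing $A$ has turning angle $\alpha_i$ at $v_i$, so its total curvature is $\mathrm{TC}(P)=\sum_{i=0}^{n-1}\alpha_i$. Since the tangent direction returns to its initial value after one traversal of the closed curve $P$, the turning number $\mathrm{tn}(P)=\mathrm{TC}(P)/(2\pi)$ is an integer, as noted in the introduction; writing $\mathrm{tn}(P)=k$ gives $\sum_{i=0}^{n-1}\alpha_i=2k\pi$. Because every generic polygon realizing $A$ has the same total curvature, the value $k$ depends only on $A$.

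For the lower bound I would prove the stronger, purely geometric statement that every generic closed polygon $Q$ satisfies $|\mathrm{tn}(Q)|\le \mathrm{cr}(Q)+1$, and then apply it to an arbitrary generic polygon $P$ realizing $A$: since $\mathrm{tn}(P)=k$, this yields $\mathrm{cr}(P)\ge |k|-1$, and taking the minimum over all such $P$ gives $\mathrm{cr}(A)\ge |k|-1$. I would establish $|\mathrm{tn}(Q)|\le \mathrm{cr}(Q)+1$ by induction on $\mathrm{cr}(Q)$. In the base case $\mathrm{cr}(Q)=0$ the polygon $Q$ is simple, so by the turning number theorem (Hopf Umlaufsatz) for simple closed polygons $\mathrm{tn}(Q)=\pm 1$, and the bound holds with equality.

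For the inductive step I would fix a self-crossing point $x$ of $Q$, where two edges cross transversally, and apply Wiener's decomposition (Fig.~\ref{fig:splittingpolygon}): re-splicing the two strands at $x$ splits $Q$ into two oriented closed polygons $Q'$ and $Q''$. The essential bookkeeping is two-fold. First, the turning numbers add: the turning angles at the original vertices are inherited unchanged, and the two new corners created at $x$ carry opposite turning angles $\theta$ and $-\theta$ (transversality guarantees $\theta\in(-\pi,\pi)\setminus\{0\}$), so $\mathrm{TC}(Q')+\mathrm{TC}(Q'')=\mathrm{TC}(Q)$ and hence $\mathrm{tn}(Q)=\mathrm{tn}(Q')+\mathrm{tn}(Q'')$. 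Second, away from $x$ the union $Q'\cup Q''$ coincides pointwise with $Q$, so no new crossings are created while the crossing $x$ is destroyed; distributing the remaining $\mathrm{cr}(Q)-1$ crossings among $Q'$, $Q''$, and the pairs between them gives $\mathrm{cr}(Q')+\mathrm{cr}(Q'')\le \mathrm{cr}(Q)-1$. Applying the induction hypothesis to $Q'$ and $Q''$ together with the triangle inequality yields $|\mathrm{tn}(Q)|\le |\mathrm{tn}(Q')|+|\mathrm{tn}(Q'')|\le (\mathrm{cr}(Q')+1)+(\mathrm{cr}(Q'')+1)\le \mathrm{cr}(Q)+1$, closing the induction.

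The delicate point, and the step I expect to require most care, is making the decomposition rigorous as an operation on \emph{generic} polygons: after re-splicing, $Q'$ and $Q''$ both pass through $x$, so a small perturbation near $x$ is needed to restore genericity (distinct vertices, transversal crossings) without altering either turning number or increasing the crossing count — one simply pushes the two new corners slightly apart. I would also confirm that the turning-number additivity and the crossing accounting are precisely the content of the cited results of Gr\"unbaum--Shepard~\cite{grunbaum1990rotation} and Wiener~\cite{Wiener64}, which I would invoke rather than reprove in full.
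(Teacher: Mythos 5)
Your proposal is correct and follows essentially the same route as the paper: the integrality of the turning number gives $\sum_i\alpha_i=2k\pi$, the base case uses the fact that a simple closed polygon has turning number $\pm 1$, and the inductive step splits the polygon at a self-crossing so that turning numbers add while the crossing count drops by at least one. Your reformulation as the polygon-level inequality $|\mathrm{tn}(Q)|\le \mathrm{cr}(Q)+1$ and your explicit care with the inequality $\mathrm{cr}(Q')+\mathrm{cr}(Q'')\le \mathrm{cr}(Q)-1$ and with restoring genericity after re-splicing are slightly tidier than the paper's phrasing, but they do not constitute a different argument.
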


\begin{proof}
Since $A$ is consistent, $\sum_{i=0}^{n-1}\alpha_i=2k\pi$ for some $k\in \mathbb{Z}$, where $k=\mathrm{tn}(P)$ is the turning number of any generic realization $P$ of the angle sequence $A$. We prove by induction on $\mathrm{cr}(A)$ that $\mathrm{cr}(A)\ge |k|-1$.

In the base case, we have $\mathrm{cr}(A)=0$. 
Let $P$ be a generic realization of $A$ such that $\mathrm{cr}(P)=0$.
%By the Jordan-Sch\"onflies curve theorem, $P$ bounds a compact region homeomorphic to a disk. 
Then $P$ is a simple polygon with $n$ vertices. %It is well known that 
The internal angles of a simple $n$-gon sum up to $(n-2)\pi$. The internal angle of $P$ at vertex $v_i$ is $\pi-\alpha_i$ or $\pi+\alpha_i$, depending on the orientation of $P$. Thus $(n-2)\pi=\sum_{i=0}^{n-1}(\pi-\alpha_i)=(n-2k)\pi$ or $(n-2)\pi=\sum_{i=0}^{n-1}(\pi+\alpha_i)=(n+2k)\pi$.
Both cases yield $|\sum_{i=0}^{n-1}\alpha_i|=2\pi$, hence $|\mathrm{tn}(P)|=k=1$ and the claim follows.

In the inductive step, we have  $\mathrm{cr}(A)\ge 1$. 
Let $P$ be a generic realization of $A$ such that $\mathrm{cr}(A)= \mathrm{cr}(P)$; 
refer to Fig.~\ref{fig:splittingpolygon}.
By splitting $P$ at a self-crossing, we obtain a pair of closed polygons $P'$ and $P''$ such that $\mathrm{tn}(P)=\mathrm{tn}(P')+\mathrm{tn}(P'')$.
Since $\mathrm{cr}(P')<\mathrm{cr}(P)$ and $\mathrm{cr}(P'')<\mathrm{cr}(P)$,
induction yields $\mathrm{cr}(P) = 1 + \mathrm{cr}(P')+\mathrm{cr}(P'')\ge 1+ |\mathrm{tn}(P')|-1+ |\mathrm{tn}(P'')|-1\ge |\mathrm{tn}(P)|-1$, as required.
\end{proof}

The following lemma shows that the lower bound in Lemma~\ref{lemma:0sum} is tight when $\alpha_i>0$ for all $i\in \{0,\ldots , n-1\}$.

\begin{lemma}
\label{lemma:0sum2}
If $A=(\alpha_0,\ldots,\alpha_{n-1})$ is an angle sequence such that $\sum_{i=0}^{n-1}\alpha_i=2k\pi$, $k\neq 0$, and  $\alpha_i>0$ for all $i$, 
then  $\mathrm{cr}(A)\le |k|-1$.
\end{lemma}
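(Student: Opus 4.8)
The plan is to prove the upper bound by an explicit construction: for the given sequence I exhibit a generic polygon $P$ realizing $A$ with exactly $|k|-1$ self-crossings. Since every $\alpha_i>0$, the partial sums $\beta_i=\sum_{j=0}^{i}\alpha_j$ strictly increase from $\alpha_0$ up to $2k\pi$, so the edge directions $\mathbf{u}_i=(\cos\beta_i,\sin\beta_i)$ rotate monotonically counterclockwise and wind exactly $k$ times around the circle; in particular $k\ge 1$ and $|k|=k$. The key preliminary observation is that the turning angles are automatically correct for \emph{any} choice of positive edge lengths: if I set $v_{i+1}=v_i+\lambda_i\mathbf{u}_i$ with $\lambda_i>0$, then the turning angle at $v_i$ equals $\beta_i-\beta_{i-1}=\alpha_i$. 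Hence all of the freedom, and all of the difficulty, lies in choosing the lengths $\lambda_i$ so that the polygon closes up and acquires few crossings.

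Next I would record that a valid length vector exists. Because the directions $\mathbf{u}_i$ wind around the full circle, they lie in no closed half-plane, so $\mathbf{0}$ is in the interior of their convex hull; together with $\sum_i\alpha_i\equiv 0\pmod{2\pi}$ this is exactly the consistency condition recalled earlier (Garg), which yields scalars $\lambda_i>0$ with $\sum_{i=0}^{n-1}\lambda_i\mathbf{u}_i=\mathbf{0}$. The solution set of this closure equation inside the positive orthant is a convex polytope of dimension $n-2$, which gives ample freedom to shape $P$. The construction then partitions the edges into $k$ consecutive \emph{rounds}, where round $r$ consists of the edges whose (unreduced) direction angle lies in $(2\pi r,2\pi(r+1)]$. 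Within a round the directions sweep essentially one full turn, so the edges of a round trace a convex arc that nearly closes a convex loop. I would choose the lengths so that these $k$ loops form a nested, linked family in which each pair of consecutive loops crosses exactly once and non-consecutive loops are disjoint — the planar analogue of a chain of nested limaçon loops — giving a total of $k-1$ crossings. Concretely, one scales successive rounds so that round $r+1$ encloses round $r$ except for a single transversal crossing created at the transition between them, using the $(n-2)$-dimensional freedom in the closure polytope to realize this picture while meeting the single global closure constraint. A final generic perturbation of the lengths, staying inside the open polytope, makes $P$ generic (vertices distinct, all intersections transversal) without changing the crossing count, since transversal crossings are stable under small perturbations.

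The base case $k=1$ is immediate: the directions wind once, so any positive closing solution yields a convex polygon with $0=k-1$ crossings. For $k\ge 2$ the main obstacle is the crossing count in the last step. One must simultaneously satisfy the single linear constraint $\sum_i\lambda_i\mathbf{u}_i=\mathbf{0}$ and guarantee that \emph{only} consecutive rounds interact, i.e.\ that the global shape really is the chain of $k$ loops and does not acquire spurious crossings between far-apart rounds. Verifying ``exactly one crossing per consecutive pair, and none otherwise'' is where the real work lies; I would control it by tracking, round by round, the winding of $P$ about a fixed interior base point and arguing that adjoining each new round raises the crossing number by exactly one, so that an easy induction on the number of rounds yields $\mathrm{cr}(P)=k-1$. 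Combined with the matching lower bound $\mathrm{cr}(A)\ge|k|-1$ of Lemma~\ref{lemma:0sum}, this shows the construction is optimal and establishes $\mathrm{cr}(A)\le|k|-1$ as claimed.
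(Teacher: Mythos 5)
Your overall framing is sound --- the directions $\mathbf{u}_i$ are forced by $A$, the edge lengths $\lambda_i>0$ subject to $\sum_i\lambda_i\mathbf{u}_i=\mathbf{0}$ are the only freedom, and the target picture (a chain of $k$ nested, nearly-closed convex arcs with one crossing per consecutive pair) is the right one. But the proof has a genuine gap exactly where you admit ``the real work lies.'' You never actually produce lengths realizing that picture: the assertions that round $r+1$ can be scaled to ``enclose round $r$ except for a single transversal crossing,'' that non-consecutive rounds can be kept disjoint, and that ``adjoining each new round raises the crossing number by exactly one'' are all stated, not proved. The difficulty is real: consecutive rounds share a transition edge of prescribed direction, a round may consist of as few as three edges and hence be quite rigid in shape, and the single global closure constraint couples all the rounds, so the claimed nesting is not an immediate consequence of the $(n-2)$-dimensional freedom in the closure polytope. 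Moreover your proposed induction has no clean inductive object, since individual rounds do not close up and so are not themselves realizations to which the statement applies.

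The paper avoids this entirely by building an \emph{open}, self-crossing-free spiral first and closing it last: it constructs a polygonal chain with the prescribed turning angles greedily, choosing each edge length so that the new edge avoids the $x$-axis and all previous edges (so the open chain has no crossings at all), observes that the non-horizontal part of this chain splits into $2k$ pairwise non-crossing $y$-monotone paths with nested $y$-extents, and then closes the chain by extending the final horizontal edge; that one extension crosses exactly $k-1$ of the monotone increasing paths, and a local surgery near the closing point restores the exact angle sequence. All the crossing-counting is thus concentrated on a single segment meeting a nested family of monotone arcs, which is elementary to verify --- precisely the verification your nested-loop plan leaves open. To repair your argument you would either need to carry out the length assignment for the nested-loop configuration explicitly (with a proof that no other crossings occur), or switch to an open-chain-then-close strategy of this kind.
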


\begin{figure}[t]
\centering
\includegraphics[scale=1]{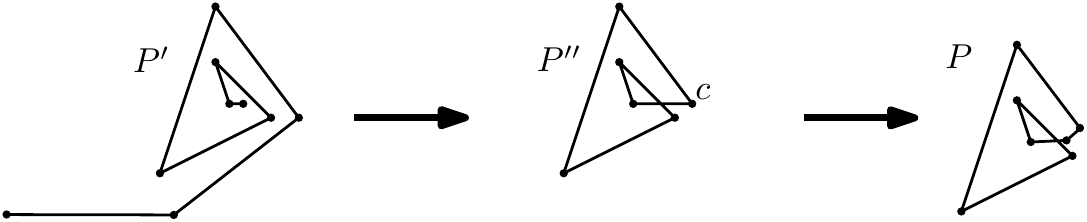}
\caption{Constructing a polygon $P$ with $|\mathrm{tn}(P)|-1$ crossings.}
\label{fig:spiral}
\end{figure}

\begin{proof}
Refer to Fig.~\ref{fig:spiral}.
In three steps, we construct a polygon $P$ realizing $A$ with $|\mathrm{tn}(P)|-1$ self-crossings thereby proving  $\mathrm{cr}(A)\le|k|-1=|\mathrm{tn}(P)|-1$. In the first step, we construct an oriented self-crossing-free  polygonal line $P'$ with $n+2$ vertices, whose first and last (directed) edges are parallel to the positive $x$-axis, and whose internal vertices have turning angles $\alpha_0,\ldots,\alpha_{n-1}$ in this order.
We construct $P'$ incrementally: The first edge has unit length starting from the origin; and every successive edge lies on a ray emanating from the endpoint of the previous edge. If the ray intersects neither the $x$-axis nor previous edges, then let the next edge have unit length, otherwise its length is chosen to avoid any such intersection. 

Let $S'$ be the last (directed) edge of $P'$, and let $\ell$ be the (horizontal) supporting line of $S'$. Since $\alpha_i>0$, for all $i$, the non-horizontal portions of $P'$ can be partitioned into $2k$ maximal $y$-monotone paths: $k$ increasing and $k$ decreasing paths. By construction, these paths are pairwise non-crossing, their $y$-extents, that is, the projections to the $y$-axis, are pairwise nested intervals, where each interval contains subsequent intervals. Consequently, $\ell$ intersects all $2k$ $y$-monotone paths. In particular, it crosses $k$ increasing paths to the right of $S'$, and meets all $k$ decreasing path at or to the left of $S'$. 

In the second step, extend $S'$ to the right until its  rightmost intersection point $c$
with a $y$-monotone increasing path of $P'$;
%creates the last self-intersection/crossing $c$ 
and denote by $P''$ the resulting closed polygon composed of the part of $P'$ from $c$ to $c$ via the extended segment $S'$. Note that $P''$ has $k-1$ self-intersections, as the extension of $S'$ crosses $P'$ in $k-1$ points.
%By making the differences between the lengths of the edges of $P'$ sufficiently large a prolongation of the last edge of $P'$  has to eventually create at least one desired self-intersection. Hence, $P''$ is well-defined. 
Finally, we construct $P$ realizing $A$ from $P''$ by a modification of $P''$ in a small neighborhood of $c$ without creating additional self-crossings. 
Specifically, we replace the neighborhood of $c$ with a scaled 
copy of the initial portion of $P'$ between the first vertex of $P'$ and $c$.
%The number of self-crossings of $P$ follows by the winding number of $P$  with respect to the point just a bit north from the end vertex of $P'$, which is $k$ or $-k$.
\end{proof}

To prove the upper bound in Theorem~\ref{thm:crossings_number}, it remains to consider the case that $A=(\alpha_0,\ldots, \alpha_{n-1})$ contains both positive and negative angles. 
The crucial notion in the proof is that of an (essential) sign change of $A$ which we define next.
Let $\beta_i=\sum_{j=0}^{i}\alpha_j \mod 2\pi$ for $i=0,\ldots , n-1$.
Let $\mathbf{v}_i\in \mathbb{R}^2$ denote the unit vector $(\cos \beta_i, \sin \beta_i)$. Hence, $\mathbf{v}_i$ is the direction vector of the $(i+1)$-st edge of an oriented polygon $P$ realizing $A$ if the direction vector of the first edge of $P$ is $(1,0)\in \mathbb{R}^2$. 
By Garg's observation~\cite[Section~6]{G98_anglegraphs}, the consistency of $A$ implies that $\mathbf{0}$ is a strictly positive convex combination of vectors $\mathbf{v}_i$, that is, there exist scalars $\lambda_0,\ldots, \lambda_{n-1}> 0$ such that $\sum_{i=0}^{n-1}\lambda\mathbf{v}_i=\mathbf{0}$ and $\sum_{i=0}^{n-1}\lambda_i=1$.

The \emph{sign change} of $A$ is an index $i\in \{0,\ldots , n-1\}$ such that $\alpha_i\cdot \alpha_{i+1}<0$, where arithmetic on the indices is taken modulo $n$. Let $\mathrm{sc}(A)$ denote the number of sign changes of $A$. 
%In what follows, the arithmetic on indices is performed modulo $n$ even if we do not state it explicitly. \csaba{We already said this in page 1.}
Note that the number of sign changes of $A$ is even.
A sign change $i\in \{0,\ldots , n-1\}$ of a consistent angle sequence $A$ is \emph{essential} if 
$\mathbf{0}$ is not a strictly positive convex combination of $\{\mathbf{v}_0,\ldots , \mathbf{v}_{i-1},\mathbf{v}_{i+1},\ldots , \mathbf{v}_{n-1}\}$.
%, or equivalently, there exists an open half-plane defined by a line through the origin containing only $\mathbf{v}_i$ among $\mathbf{v}_0,\ldots, \mathbf{v}_{n-1}$.

\begin{lemma}
\label{lemma:all-essential}
If $A=(\alpha_0,\ldots,\alpha_{n-1})$ is a consistent angle sequence, where $\sum_{i=0}^{n-1}\alpha_i=2k\pi$, $k\in \mathbb{Z}$, 
and all sign changes are essential, 
then $\mathrm{cr}(A)\le \big||k|-1\big|$.
  \end{lemma}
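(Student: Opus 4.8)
The plan is to reduce the mixed-sign all-essential case to the all-positive case already settled in Lemma~\ref{lemma:0sum2}, using essentiality precisely to guarantee that the negative runs can be absorbed without creating extra crossings. First I would normalize by symmetry: negating every angle, $\alpha_i\mapsto -\alpha_i$, produces the mirror image of any realizing polygon, so it sends $k\mapsto -k$ but preserves the crossing number, the set of sign changes, and essentiality (it maps each $\mathbf{v}_i$ to its reflection across the $x$-axis, a linear bijection fixing $\mathbf{0}$ and therefore preserving strictly positive convex combinations of $\mathbf{0}$, hence both consistency and essentiality). Thus I may assume $k\ge 0$, so the target bound $\big||k|-1\big|$ equals $k-1$ when $k\ge 1$ and $1$ when $k=0$.

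Next I would record the geometric meaning of essentiality that drives the construction. By Garg's characterization, consistency means $\mathbf{0}\in\operatorname{relint}\operatorname{conv}\{\mathbf{v}_0,\ldots,\mathbf{v}_{n-1}\}$. A sign change at $i$ occurs exactly at a local extremum of the lifted direction $\tilde\beta_i=\sum_{j\le i}\alpha_j$ (a local maximum when $\alpha_i>0>\alpha_{i+1}$, a local minimum when $\alpha_i<0<\alpha_{i+1}$), and essentiality says that deleting $\mathbf{v}_i$ destroys the origin-containment. In the spanning two-dimensional case this is equivalent to the existence of a line through the origin, with normal $\mathbf{w}_i$, satisfying $\langle \mathbf{v}_j,\mathbf{w}_i\rangle\ge 0$ for all $j\ne i$ while $\langle \mathbf{v}_i,\mathbf{w}_i\rangle<0$; that is, the reversal direction $\mathbf{v}_i$ is \emph{exposed}, being the unique edge direction strictly on one side of a line through $\mathbf{0}$. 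Since each maximal negative run both begins and ends at such an exposed reversal direction, the run points "outward" and should be realizable as a thin spike retracted into the open halfplane $\langle\cdot,\mathbf{w}_i\rangle<0$, crossing nothing else.

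With this in hand I would induct on the number of sign changes $\mathrm{sc}(A)$. The base case $\mathrm{sc}(A)=0$ is exactly Lemma~\ref{lemma:0sum2}: all angles positive, forcing $k\ge 1$, and giving $k-1$ crossings. For the inductive step I would select a maximal negative run and \emph{cancel} it against an equal amount of turning taken from an adjacent positive run; this transparently preserves $\sum_{i}\alpha_i=2k\pi$, hence $k$, while decreasing $\mathrm{sc}(A)$ by two. The cancelled portion corresponds to a short back-and-forth detour, which I tuck into the exposed halfplane supplied by the two bounding reversal directions, so that a minimum-crossing realization of the reduced sequence $A'$ extends to one of $A$ with no new crossings. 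Iterating down to $\mathrm{sc}=0$ and invoking Lemma~\ref{lemma:0sum2} yields $k-1$. The case $k=0$ is degenerate for Lemma~\ref{lemma:0sum2}, so I treat it separately: here cancellation bottoms out at a single positive and single negative run ($\mathrm{sc}=2$) of equal total turning, which I realize as a figure-eight with exactly one crossing, matching $\big||0|-1\big|=1$ (the further degenerate subcase in which all $\mathbf{v}_i$ are collinear is handled directly).

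The main obstacle is the inductive step, and specifically proving that the cancellation-and-tuck reduction is legitimate in all respects simultaneously: that contracting a negative run keeps the origin a strictly positive convex combination of the surviving directions (consistency), that every remaining sign change stays essential, and — most delicately — that reinserting the contracted run as a short spike introduces no self-crossing. This last point is exactly where essentiality is indispensable: without exposedness a negative run could wind back through the body of the polygon and force extra crossings, whereas the separating line $\mathbf{w}_i$ carves out a concrete region into which the spike retracts. I expect the bookkeeping that the reduction preserves the all-essential hypothesis to be the most technical part of the argument.
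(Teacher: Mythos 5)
There is a genuine gap, and it sits exactly where you placed your confidence: the cancellation step. Your inductive step deletes (or absorbs) a maximal negative run bounded by essential sign changes and then invokes the induction hypothesis on the reduced sequence $A'$. But for $A'$ to have any generic realization at all, it must be consistent, i.e.\ $\mathbf{0}$ must remain a strictly positive convex combination of the surviving direction vectors. The definition of an \emph{essential} sign change $i$ is precisely that $\mathbf{0}$ is \emph{not} a strictly positive convex combination of $\{\mathbf{v}_0,\ldots,\mathbf{v}_{i-1},\mathbf{v}_{i+1},\ldots,\mathbf{v}_{n-1}\}$ --- and your cancellation removes the extremal direction $\mathbf{v}_i$ at that sign change from the direction set (essentiality also forces $\mathbf{v}_i$ to be the unique direction vector in some open halfplane through $\mathbf{0}$, so no other index can stand in for it). Hence $A'$ is inconsistent, the induction hypothesis is vacuous, and there is no realization of $A'$ into which to tuck your spike. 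You listed ``contracting a negative run keeps the origin a strictly positive convex combination'' as a technical point to verify; it is not technical --- it is false by definition under your hypotheses. The merge-and-reinsert move you describe is the right tool, but only at \emph{nonessential} sign changes (this is exactly how the paper's proof of Theorem~\ref{thm:crossings_number} consumes those); it cannot be run at essential ones.

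The paper's proof goes a different and much shorter way, exploiting how restrictive ``all sign changes essential'' is. If $\sum_i\alpha_i\neq 0$ ($k\neq 0$), it shows that \emph{no} sign change can be essential: an essential sign change $j$ yields a separating direction forcing all partial sums $\beta_i'$ of the merged sequence into $[0,\pi]$, which forces $\sum_i\alpha_i=0$, a contradiction. So under your hypotheses $\mathrm{sc}(A)=0$, all angles have one sign, and Lemma~\ref{lemma:0sum2} finishes immediately --- no induction on $\mathrm{sc}(A)$ is ever needed. If $\sum_i\alpha_i=0$, it shows that all-essential forces $\mathrm{sc}(A)=2$ exactly (four essential sign changes would require a fifth direction vector, contradicting essentiality of one of them), and then the two convex chains meeting at the two exposed directions are realized directly as a figure-eight with one crossing. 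Your proposal reaches the correct endpoint for $k=0$ but does not prove $\mathrm{sc}(A)=2$ there, and for $k\neq 0$ it attempts an induction whose every step is blocked. To repair your write-up you would essentially have to replace the inductive step with the paper's two structural claims.
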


\begin{proof}
We distinguish between two cases depending on whether $\sum_{i=0}^{n-1}\alpha_i=0$.

\noindent \textbf{Case 1: $\sum_{i=0}^{n-1}\alpha_i=0$.}
Since $\sum_{i=0}^{n-1}\alpha_i=0$, we have $\mathrm{sc}(A)\ge 2$.
Since all sign changes are essential, for any two distinct sign changes $i\neq j$, 
we have $\mathbf{v}_i\neq \mathbf{v}_j$, therefore counting different vectors ${\bf v}_i$, 
where $i$ is a sign change, is equivalent to counting essential sign changes.

We show next that $\mathrm{sc}(A)=2$. Suppose, to the contrary, that $\mathrm{sc}(A)>2$. 
Since the number of sign changes in a cyclic sequence of signs is even, we have $\mathrm{sc}(A)\ge 4$. We observe that if $\mathbf{v}_i$ corresponds to an essential sign change $i$, then there exists an open halfplane $H_i$ bounded by a line through the origin that such that 
$H_i\cap \{\mathbf{v}_0,\ldots ,\mathbf{v}_{n-1}\}=\{\mathbf{v}_i\}$. 
Let $i$, $j$, $i'$, and $j'$ be distinct essential sign changes such that 
$\mathbf{v}_i$, $\mathbf{v}_j$, $\mathbf{v}_{i'}$, and $\mathbf{v}_{i'}$ are in cyclic order around the origin. Since $H_i$ and $H_{i'}$ contains neither $\mathbf{v}_j$ nor $\mathbf{v}_{j'}$,
then $H_i$ and $H_{i'}$ are disjoint, lying on opposite sides of a line, which necessarily  contains both $\mathbf{v}_j$ and $\mathbf{v}_{j'}$. In particular, we have $\mathbf{v}_b=-\mathbf{v}_d$. Analogously, we can show that $\mathbf{v}_a=\mathbf{v}_d$.
%Thus, if $i$ and $i'$ are distinct essential sign changes, for any other essential sign change $j$ we have that ${\bf v}_j$ is contained in a closed convex cone bounded by $-{\bf v}_i$ and $-{\bf v}_{i'}$ unless  $-{\bf v}_i={\bf v}_{i'}$. When such a convex cone contains ${\bf v}_j$ and ${\bf v}_{j'}$ for two distinct sign changes, then ${\bf v}_j\neq {\bf v}_{j'}$ and both 
%Hence, the only possibility for having  4 essential sign changes  $i$, $i'$, $j$, and $j'$ is if they satisfy ${\bf v}_i=-{\bf v}_{i'}$, ${\bf v}_{j}=-{\bf v}_{j'}$ and ${\bf v}_{i}\neq\pm{\bf v}_{j}$. \csaba{There was a gap in the proof when we said ``the only possibility'' ...}
Since $j$ is an sign change, either $H_i$ or $H_{i'}$ contains both $\mathbf{v}_{j-1}$ and $\mathbf{v}_{j+1}$. Thus there exists a fifth vector $\mathbf{v}_k$, which implies that one of $i$, $i'$, $j$, and $j'$ is not essential (contradiction).
 
Assume w.l.o.g.\ that the only two sign changes are $j$ and $n-1$, for some  $j\in \{0,\ldots ,n-2\}$.
We claim that $\mathbf{v}_{j}\neq -\mathbf{v}_{n-1}$. Suppose, to the contrary, that 
$\mathbf{v}_{j}=-\mathbf{v}_{n-1}$. Since both sign changes are essential, all vectors $\mathbf{v}_i$, other than $\mathbf{v}_{j}$ and $\mathbf{v}_{n-1}$, are outside of $H_j\cup H_{n-1}$. If $H_j\cap H_{n-1}\neq \emptyset$, then these vectors are an open half-plane bounded by the line through $\mathbf{v}_{j}$ and $-\mathbf{v}_{n-1}$. However, then $\mathbf{0}$ is not a strict convex combination of the vectors $\{\mathbf{v}_0,\ldots , \mathbf{v}_{n-1}$, contradicting the consistency of $A$. Hence we may assume that $H_j$ and $H_{n-1}$ are disjoint, and they lie on opposite sides of a line through the origin. Due to the consistency of $A$, there exists a pair $\{i,i'\}$ such that $\mathbf{v}_i=-\mathbf{v}_{i'}$. However, $j$ and $n-1$ are the only sign changes by assumption, and thus there exists a fifth index $\ell$ such that $\mathbf{v}_\ell\neq\pm \mathbf{v}_{i}$ (contradiction).

\begin{figure}[htb]
\centering
\includegraphics[scale=1]{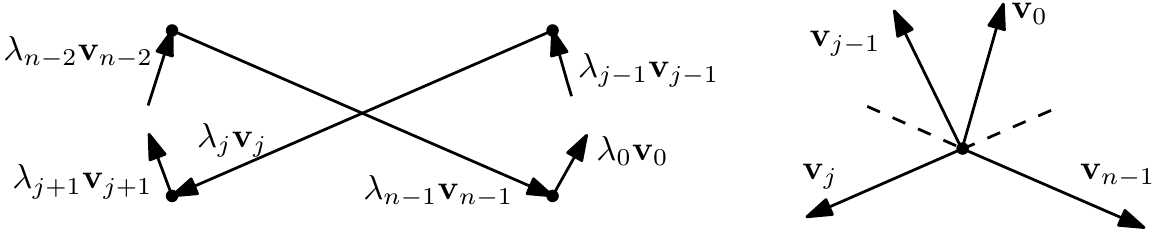}
\caption{The case of exactly 2 sign changes $j$ and $n-1$, both of which are essential, when $\sum_{i=0}^{n-1}\alpha_i=0$. Both missing parts of the polygon on the left are convex chains.}
\label{fig:2essential}
\end{figure}

We may assume that ${\bf v}_j$ and ${\bf v}_{n-1}$ are not collinear, and that the remaining vectors ${\bf v}_i$ belong to the closed convex cone bounded by $-{\bf v}_j$ and $-{\bf v}_{n-1}$; refer to Fig.~\ref{fig:2essential}.
 Thus, we may assume that (i)~$\beta_{n-1}=0$, (ii)~the sign changes of $A$ are $j$ and $n-1$, and (iii)~$0<\beta_0<\ldots <\beta_{j}$ and $\beta_{j}>\beta_{j+1}>\ldots >\beta_{n-1}=0$.
 Now, realizing $A$ by a generic
 polygon with exactly 1 crossing between the line segments in the direction of $\mathbf{v}_{j}$ and $\mathbf{v}_{n-1}$ is a simple exercise.

\noindent \textbf{Case 2: $\sum_{i=0}^{n-1}\alpha_i\neq 0$.}
We show that, unlike in the first case, none of the sign changes of $A$ can be essential. Indeed, suppose $j$ is an essential sign change, and let
$A'=(\alpha_0',\ldots, \alpha_{n-2}')=(\alpha_0,\ldots, \alpha_{j-1}, \alpha_{j}+\alpha_{j+1},\ldots, \alpha_{n-1})$ and $\beta_i'=\sum_{j=0}^{i}\alpha_j' \mod 2\pi$.
Consider the unit vectors $\mathbf{v}_0',\ldots ,\mathbf{v}_{n-2}'$, where 
 $\mathbf{v}_i'=(\cos \beta_i', \sin \beta_i')$. Since $j$ is an essential sign change, there exists a nonzero vector $\mathbf{v}$ such that $\big\langle \mathbf{v},\mathbf{v}_j\big\rangle> 0$ and $\big\langle \mathbf{v},\mathbf{v}_i'\big\rangle\le 0$ for all $i$. Hence, by symmetry, we may assume that $0\le\beta_i'\le \pi$, for all $i$. 
 Since $j$ is a sign change, we have $-\pi<\alpha_i'<\pi$ for all $i$, consequently $\beta_j'=\sum_{i=0}^{j}\alpha_i'\mod{2\pi}=\sum_{i=0}^{j}\alpha_i'$,
 which in turn implies, by Lemma~\ref{lemma:0sum}, that $0=\beta_{n-2}'= \sum_{i=0}^{n-2}\alpha_i'=\sum_{i=0}^{n-1}\alpha_i$ (contradiction). 
 
We have shown that $A$ has no sign changes. 
By Lemma~\ref{lemma:0sum2}, we have $\text{cr}(A)\le |k|-1$, which concludes the proof.
\end{proof}

\CrossingNumber*

\begin{proof} %[Proof of Theorem~\ref{thm:crossings_number}]
The claimed lower bound $\mathrm{cr}(A)\ge \big||k|-1\big|$ on the crossing number of $A$ follows by Lemma~\ref{lemma:0sum}, in the case when $k\neq 0$, and the result of Viyajan~\cite[Theorem~2]{vijayan1986geometry} in the case when $k=0$.
It remains to prove the upper bound $\mathrm{cr}(A)\le \big||k|-1\big|$.

We proceed by induction on $n$. In the base case, we have $n=3$. Then $P$ is a triangle,
$\sum_{i=0}^{2}\alpha_i =\pm 2\pi$, and $\mathrm{cr}(A)=0$, as required.
In the inductive step, assume $n\geq 4$, and that the claim holds for all shorter angle sequences. Let $A=(\alpha_0,\ldots, \alpha_{n-1})$ be an angle sequence with $\sum_{i=0}^{n-1}\alpha_i=2k\pi$.

If $A$ has no sign changes or if all sign changes are essential, then Lemma~\ref{lemma:0sum2} or Lemma~\ref{lemma:all-essential} completes the proof. Otherwise, there is at least one nonessential sign change. Let $s\in \{0,\ldots, n-1\}$ be a nonessential sign change and let $A'=(\alpha_0',\ldots, \alpha_{n-2}')=(\alpha_0,\ldots, \alpha_{s-1}, \alpha_{s}+\alpha_{s+1},\ldots, \alpha_{n-1})$. Note that $\sum_{i=0}^{n-2}\alpha_i'=2k\pi$. 
We eliminate $\alpha_s +\alpha_{s+1}$ from $A'$ if it is equal to 0.
Since the sign change $s$ is nonessential, $\mathbf{0}$ is a strictly positive convex combination of $\{\beta_0',\ldots , \beta_{n-2}'\}$, where $\beta_i'=\sum_{j=0}^{i}\alpha_j' \mod 2\pi$ for $i=0,\ldots, n-2$. Indeed, this follows from the fact that $\beta_i'=\beta_i$, for $i<s$, and $\beta_i'=\beta_{i+1}$, for $i\ge s$.

By the induction hypothesis, we obtain a realization of $A'$ as a generic polygon $P'$ with $\big||k|-1\big|$ crossings.
Let $v$ be a vertex of $P'$ corresponding to $\alpha_s+\alpha_{s+1}$, which is incident to sides $S_{s-1}'$ and $S_s'$ of $P'$  parallel to vectors ${\bf v}_{s-1}={\bf v}_{s-1}'$  and ${\bf v}_{s+1}={\bf v}_{s}'$.
We construct a generic polygon realizing $A$ by modifying $P$ in a small neighborhood of $v$ without introducing crossings, similarly to the method developed by Guibas et al.~\cite{GHS00_morph} as follows.
If $\alpha_s +\alpha_{s+1}=0$, then $\alpha_s+\alpha_{s+1}$ is eliminated from the sequence $A'$. We define $v$ as a vertex corresponding to $\alpha_{s+2}$ in this case.

\begin{figure}[htb]
\centering
\includegraphics[scale=1]{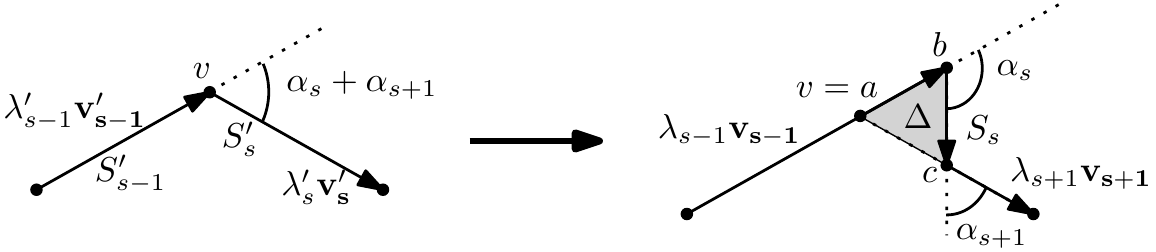}
\caption{Re-introducing the $s$-th vertex to the polygon $P'$ realizing $A'$ in order to obtain a polygon $P$ realizing $A$ when $\alpha_s+\alpha_{s+1}\neq 0$.}
\label{fig:3essential}
\end{figure}

First, we consider the case that $\alpha_s +\alpha_{s+1}\neq 0$.
Assume w.l.o.g.\ that $\alpha_s$ and $\alpha_s +\alpha_{s+1}$ have the same sign;
%we may further assume w.l.o.g.\ that  $\alpha_s<0$ and  $\alpha_{s+1}>0$; 
refer to Fig.~\ref{fig:3essential}.
Then there exists a triangle $\Delta=\Delta(abc)$ such that $\vec{ab}$, $\vec{bc}$, and $\vec{ca}$ are positive multiples of $\mathbf{v}_{s-1}=\mathbf{v}_{s-1}'$, $\mathbf{v}_s$, and $-\mathbf{v}_{s+1}=-\mathbf{v}_{s}'$, respectively. 
By a suitable translation, we may assume that $a=v$; and by a suitable scaling, we may assume that $\Delta$ is disjoint from all sides of $P'$ other than $S_{s-1}'$ and $S_s'$. Then we construct $P$ from $P'$ as follows. We extend $S_{s-1}'$ beyond $v=a$ with segment $ab$, insert a new side $bc$, and shorten $S_{s}'$ by removing segment $ac=vc$.
%sign change, that is, $\alpha_{s}\alpha_{s+1}<0$, this is possible which we show next. W.l.o.g.  $\alpha_s<0$ and  $\alpha_{s+1}>0$. First, Let $\Delta$ denote a small triangle with the internal angles $-\alpha_s - \alpha_{s+1}$, $\alpha_{s+1}$, and $\pi + \alpha_{s}$, having the vertex $v$, and one side incident to $v$ of length $\varepsilon$ lying on $S_{s'}$ and the other side incident to $v$ lying on the extension of $S_{s-1}'$. Then $S_s$ can be taken as the side of $\Delta$ that is opposite to $v$.  
%Second, if $\alpha_s +\alpha_{s+1}>0$, then we flip the roles of $S_{s-1}'$ and $S_{s'}$, that is, we extend $S_{s'}$ beyond $v$ as the first step of the construction rather than extending $S_{s-1}'$.
% Finally, we delete the part $S_{s}'$ between $v$ and its intersection with $S_s$.
 
\begin{figure}[htb]
\centering
\includegraphics[scale=1]{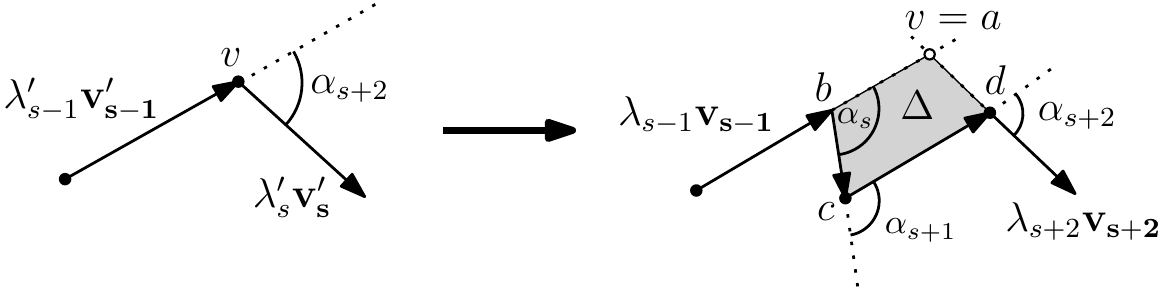}
\caption{Re-introducing the $s$-th and $(s+1)$-st vertex to the polygon $P'$ realizing $A'$ in order to obtain a polygon $P$ realizing $A$ when $\alpha_s+\alpha_{s+1}=0$.}
\label{fig:3essential2}
\end{figure}
 
It remains to consider the case that $\alpha_s +\alpha_{s+1}=0$.
Assume w.l.o.g.\ that $\alpha_s$ and $\alpha_{s+2}$ have the same sign;
refer to Fig.~\ref{fig:3essential2}.
Then there exists a trapezoid $\Delta=\Delta(abcd)$ such that $\vec{ab}$, $\vec{bc}$, $\vec{cd}$, and $\vec{da}$ are positive multiples of $-\mathbf{v}_{s-1}=-\mathbf{v}_{s-1}'$, $\mathbf{v}_s$, $\mathbf{v}_{s+1}=\mathbf{v}_{s-1}'$, and $-\mathbf{v}_{s+2}=-\mathbf{v}_{s}'$, respectively. 
By a suitable translation, we may assume that $a=v$; and by a suitable scaling, we may assume that $\Delta$ is disjoint from all sides of $P'$ other than $S_{s-1}'$ and $S_s'$. Then we construct $P$ from $P'$ as follows. We shorten $S_{s-1}'$ by removing segment $ab=vb$, insert two new sides $bc$ and $cd$, and shorten $S_{s}'$ by removing segment $da=dv$.
%We may assume w.l.o.g.\ that $\alpha_{s+1}<0$ and construct a polygon realizing $P$ from $P'$ as follows. We extend or shorten the side of $P'$ incident to $v$ parallel to ${\bf v}_{s+2}$ by a sufficiently small $\varepsilon>0$ depending on whether $\alpha_{s}>0$ and $\alpha_{s+1}<0$ or $\alpha_{s}<0$ and $\alpha_{s+1}>0$, respectively. 
% We append a short polygonal line to the new end vertex of the shortened/extended side that consists of two line segment parallel to vectors ${\bf v}_{s-1}={\bf v}_{s+1}$ and ${\bf v}_{s}$ that meets $P'$ in the interior of its other side incident to $v$.
% Finally, we delete the part of this side between $v$ and the newly constructed part.
 \end{proof}

\section{Realizing Angle Sequences in 3-Space}
\label{sec:3D}

In this section, we describe a polynomial-time algorithm to decide whether an angle sequence $A=(\alpha_0,\ldots \alpha_{n-1})\in (0,\pi)^n$ can be realized as a polygon in $\mathbb{R}^3$.

We note that this problem is equivalent to solving a system of polynomial equations, where $3n$ variables describe the coordinates of the $n$ vertices of $P$, and each of $n$ equations is obtained by the cosine theorem applied for a vertex and two incident edges of $P$. However, it is unclear how to solve such a system efficiently.

By Fenchel's theorem in differential geometry~\cite{F51_curve}, the total curvature of a smooth curve in $\mathbb{R}^d$ is at least $2\pi$, and the curves with the total curvature equal to $2\pi$ must be plane. Fenchel's theorem has been adapted to closed polygons~\cite[Theorem~2.4]{Sullivan2008}, and it gives the following a necessary condition for an angle sequence $A$ to have a realization in $\mathbb{R}^d$, for all $d\geq 2$:
\begin{equation}
\label{eqn:3d_necessary}
\sum_{i=0}^{n-1}\alpha_i \ge 2\pi,
\end{equation}
and if $\sum_{i=0}^{n-1}\alpha_i \ge 2\pi$, then any realization lies in a plane (an affine subspace of $\mathbb{R}^d$). 
We show that a slightly stronger condition is both necessary and sufficient, hence it characterizes realizable angle sequences in $\mathbb{R}^3$.

\begin{lemma}\label{lemma:3d_realization_char}
Let $A=(\alpha_0,\ldots,\alpha_{n-1})$, $n\ge 3$, be an angle sequence.
There exists a polygon $P\subset \mathbb{R}^3$ realizing $A$ if and only if
there exists a spherical polygon $P'\subset \mathbb{S}^2$ realizing $A$ such that $\mathbf{0}\in \mathrm{relint}(\mathrm{conv}(P'))$ (relative interior of $\mathrm{conv}(P')$). 
Furthermore, $P$ can be constructed efficiently if $P'$ is given.
\end{lemma}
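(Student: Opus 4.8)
The plan is to establish the stated equivalence through the direction-vector correspondence already flagged in the introduction, whereby the unit edge directions of a spatial polygon are exactly the vertices of the spherical polygon, and the turning angles of $P$ are exactly the spherical arc lengths of $P'$. Concretely, for a closed polygon $P=(w_0,\ldots,w_{n-1})$ in $\mathbb{R}^3$, write each edge as $w_{i+1}-w_i=\ell_i\mathbf{u}_i$ with $\ell_i>0$ and $\mathbf{u}_i\in\mathbb{S}^2$. Since each turning angle $\alpha_i\in(0,\pi)$ equals the angle between $\mathbf{u}_{i-1}$ and $\mathbf{u}_i$, which in turn equals the spherical distance between these two points, the cyclic sequence $(\mathbf{u}_0,\ldots,\mathbf{u}_{n-1})$ is a spherical polygon $P'$ whose $i$-th segment has length $\alpha_i$; that is, $P'$ realizes $A$. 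The key algebraic fact I would isolate first is that \emph{closure} of $P$ --- the telescoping identity $\sum_{i=0}^{n-1}\ell_i\mathbf{u}_i=\mathbf{0}$ --- is, after normalizing the $\ell_i$ to sum to $1$, precisely the statement that $\mathbf{0}$ is a strictly positive convex combination of the vertices of $P'$. I would then invoke the standard convexity fact that, for a finite point set, $\mathbf{0}$ lies in the relative interior of the convex hull if and only if it admits such a representation with all coefficients strictly positive; here $\mathrm{conv}(P')$ is read as the convex hull of the vertex set $\{\mathbf{u}_0,\ldots,\mathbf{u}_{n-1}\}$.

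With this dictionary in place the two implications are short. For the forward direction, given $P$ I extract the $\mathbf{u}_i$ and $\ell_i$ as above; the previous paragraph shows that $P'$ realizes $A$ and that $\mathbf{0}\in\mathrm{relint}(\mathrm{conv}(P'))$. The only points to check are that $P'$ is a genuine spherical polygon --- each arc is nondegenerate and avoids antipodal pairs because $\alpha_i\in(0,\pi)$ forces $\mathbf{u}_{i-1}\neq\pm\mathbf{u}_i$ --- and that the normalization of the positive lengths is harmless.

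For the converse, given $P'$ realizing $A$ with $\mathbf{0}\in\mathrm{relint}(\mathrm{conv}(P'))$, I would produce strictly positive coefficients $\lambda_0,\ldots,\lambda_{n-1}>0$ with $\sum_i\lambda_i\mathbf{u}_i=\mathbf{0}$ (this is exactly the relative-interior hypothesis, read through the same convexity fact), and then define $P$ by fixing $w_0$ arbitrarily and setting $w_{i+1}=w_i+\lambda_i\mathbf{u}_i$. This closes up by the choice of the $\lambda_i$, has positive edge lengths, and realizes $A$ because the consecutive edge directions $\mathbf{u}_{i-1},\mathbf{u}_i$ subtend the angle $\alpha_i$ at $w_i$. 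For the efficiency claim I would note that the $\lambda_i$ are obtained from the $\mathbf{u}_i$ by solving the linear feasibility system $\sum_i\lambda_i\mathbf{u}_i=\mathbf{0}$, $\sum_i\lambda_i=1$, $\lambda_i>0$, which is a linear program solvable in (weakly) polynomial time; the vertices $w_i$ are then assembled by prefix sums.

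I expect the only genuinely delicate point to be the convexity characterization and its compatibility with the meaning of $\mathrm{conv}(P')$: the condition that governs whether $P$ closes is membership of $\mathbf{0}$ in the relative interior of the convex hull of the \emph{vertices} $\mathbf{u}_i$, and I would make this reading explicit, since the relative interior is sensitive to the affine hull and could differ if one instead took the convex hull of the entire spherical curve. The remaining verifications --- nondegeneracy of the arcs and the equivalence between the positive-combination and relative-interior formulations --- are routine once the correspondence is set up.
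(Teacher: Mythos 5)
Your proposal is correct, and its backbone --- the dictionary between a closed polygon $P\subset\mathbb{R}^3$ and the spherical polygon of its unit edge directions, under which the closure condition $\sum_i\ell_i\mathbf{u}_i=\mathbf{0}$ becomes ``$\mathbf{0}$ is a strictly positive convex combination of the $\mathbf{u}_i$'' --- is exactly the paper's. Where you diverge is in how the positive-combination condition is tied to $\mathbf{0}\in\mathrm{relint}(\mathrm{conv}(P'))$ and how the coefficients are produced. The paper proves the forward direction by contradiction via a separating hyperplane, and in the converse splits into two cases: if all $\mathbf{u}_i$ lie on a great circle it falls back on Garg's planar consistency criterion, and otherwise it applies Carath\'eodory's theorem to find four vertices whose hull contains $\mathbf{0}$ and then mixes that combination with the centroid to make every coefficient strictly positive (a strongly polynomial construction, since the auxiliary LP lives in $\mathbb{R}^3$). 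You instead invoke, in both directions, the single standard fact that the relative interior of the convex hull of a finite point set is exactly the set of its strictly positive convex combinations, which eliminates the separation argument, the coplanar case split, and the Carath\'eodory/centroid step; the price is that your coefficients come from an LP in $n$ variables (with the usual max-min-slack reformulation to handle the strict inequalities), which is only weakly polynomial --- harmless here, since the downstream theorems claim only weak polynomiality. Your remark about reading $\mathrm{conv}(P')$ as the hull of the vertex set is a sensible precaution, though in fact the two readings agree: each spherical segment, being a minor arc, lies in the convex cone generated by its endpoints, so $\mathbf{0}$ lies in the relative interior of the hull of the curve if and only if it lies in the relative interior of the hull of the vertices.
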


\begin{proof}
Assume that an oriented polygon $P=(v_0,\ldots ,v_{n-1})$ realizes $A$ in $\mathbb{R}^3$. 
Let $\mathbf{u}_i=(v_{i+1}-v_i)/\|v_{i+1}-v_i\|\in \mathbb{S}^2$ be the unit direction vector of the edge $v_i v_{i+1}$ of $P$ according to its orientation. Then $P'=(\mathbf{u}_0,\ldots ,\mathbf{u}_{n-1})$ is a spherical polygon that realizes $A$. Suppose, for the sake of contradiction, that $\mathbf{0}$ is not in the relative interior of $\mathrm{conv}(P')$.
Then there is a plane $H$ that separates $\mathbf{0}$ and $P'$, that is, if $\mathbf{n}$ is the normal vector of $H$, then $\big\langle \mathbf{n},\mathbf{u}_i\big\rangle>0$ for all $i\in \{0,\ldots , n-1\}$. This implies $\big\langle  \mathbf{n},(v_{i+1}-v_i)\big\rangle>0$ for all $i$, hence $\big\langle  \mathbf{n},\sum_{i=1}^{n-1}(v_{i+1}-v_i)\big\rangle>0$, which contradicts the fact that $\sum_{i=1}^{n-1}(v_{i+1}-v_i)=\mathbf{0}$, and $\big\langle  \mathbf{n},\mathbf{0}\big\rangle=0$. 

Conversely, assume that a spherical polygon $P'$ realizes $A$, with edge lengths $\alpha_0,\ldots, \alpha_{n-1}>0$. If all the vertices of $P'$ lie on a common great circle, then $\mathbf{0}\in \mathrm{relint}(\mathrm{conv}(P'))$ implies $\sum_{i=0}^{n-1}\pm\alpha_i= 0 \mod 2\pi$, where the sign is determined by the direction (cw. or ccw.) in which a particular segment of $P'$ traverses the common great circle according to its orientation.
As observed by Garg~\cite[Section 6]{G98_anglegraphs}, the signed angle sequence is consistent in this case due to the assumption that $\mathbf{0}\in \mathrm{relint}(\mathrm{conv}(P'))$. Thus, we obtain a realization of $A$ that is contained in a plane.

Otherwise we may assume that $\mathbf{0}\in \mathrm{int}(\mathrm{conv}(P'))$.
By Carath\'{e}odory's theorem~\cite[Thereom 1.2.3]{matousek2013lectures}, $P'$ has 4 vertices whose convex combination is the origin $\mathbf{0}$. Then we can express $\mathbf{0}$ as a strictly positive convex combination of \emph{all} vertices of $P'$. The coefficients in the convex combination encode the lengths of the edges of a polygon $P$ realizing $A$, which concludes the proof in this case.

We now show how to compute strictly positive coefficients in strongly polynomial time. Let $\mathbf{c}=\frac{1}{n}\sum_{i=0}^{n-1}\mathbf{u}_i$ be the centroid of the vertices of $P'$. 
If $\mathbf{c}=\mathbf{0}$, we are done. Otherwise, we can find a tetrahedron $T=\mathrm{conv}\{\mathbf{u}_{i_0},\ldots, \mathbf{u}_{i_3}\}$ such that $\mathbf{0}\in T$ 
and such that the ray from $\mathbf{0}$ in the direction $-\mathbf{c}$  intersects $\mathrm{int}(T)$, by solving an LP feasibility problem in $\mathbb{R}^3$. 
By computing the intersection of the ray with the faces of $T$, we find the maximum $\mu>0$ such that $-\mu\mathbf{c}\in \partial T$ (the boundary of $T$). We have $-\mu\mathbf{c}=\sum_{j=0}^3 \lambda_j \mathbf{u}_{i_j}$ and $\sum_{j=0}^3\lambda_j=1$ for suitable coefficients $\lambda_j\geq 0$. Now $\mathbf{0}=\mu\mathbf{c}-\mu\mathbf{c}=\frac{\mu}{n}\sum_{i=0}^{n-1}\mathbf{u}_i+\sum_{j=0}^3 \lambda_j\mathbf{u}_{i_j}$ is a strictly positive convex combination of the vertices of $P'$. 
\end{proof}

It is easy to find an angle sequence $A$ that satisfies~\eqref{eqn:3d_necessary} but does not correspond to a spherical polygon $P'$. 
Consider, for example,  $A=(\pi-\varepsilon,\pi-\varepsilon,\pi-\varepsilon,\varepsilon)$, 
for some small $\varepsilon>0$. 
Points in $\mathbb{S}^2$ at (spherical) distance $\pi-\varepsilon$ are nearly antipodal. 
Hence, the endpoints of a polygonal chain $(\pi-\varepsilon,\pi-\varepsilon,\pi-\varepsilon)$ are nearly antipodal as well, and cannot be connected by an edge of (spherical) length $\varepsilon$. Thus a spherical polygon cannot realize $A$.

\paragraph{Algorithms.} 
In the remainder of this section, we show how to find a realization $P\subset \mathbb{R}^3$ or report that none exists, in polynomial time. Our first concern is to decide whether an angle sequence is realizable by a spherical polygon. This is possible to do in a weakly polynomial-time.

\RealizationChar*
  
\begin{proof}
Let $A=(\alpha_0,\ldots , \alpha_{n-1})\in (0,\pi)^n$ be a given angle sequence.
Let $\mathbf{n}=(0,0,1)\in \mathbb{S}^2$, that is, $\mathbf{n}$ is the north pole. 
For $i\in \{0,1,\ldots , n-1\}$,
let $U_i\subseteq \mathbb{S}^2$ be the locus of the end vertices $\mathbf{u}_i$ of all (spherical) polygonal lines $P_i'=(\mathbf{n},\mathbf{u}_0,\ldots ,\mathbf{u}_i)$ with edge lengths $\alpha_0,\ldots , \alpha_{i-1}$. It is clear that $A$ is realizable by a spherical polygon $P'$ if and only if $\mathbf{n}\in U_{n-1}$. 

Note that for all $i\in \{0,\ldots , n-1\}$, the set $U_i$ is invariant under rotations about the $z$-axis, since $\mathbf{n}$ is a fixed point and rotations are isometries. We show how to compute the sets $U_i$, $i\in \{0,\ldots ,n-1\}$, efficiently. 

We define a \emph{spherical zone} as a subset of $\mathbb{S}^2$ between two horizontal planes (possibly, a circle, a spherical cap, or a pole). Recall the parameterization of $\mathbb{S}^2$ using spherical coordinates (cf. Figure~\ref{fig:param} (left)):  for every $\mathbf{v}\in \mathbb{S}^2$, $\mathbf{v}(\psi, \varphi)=(\sin \psi\sin\varphi,\cos\psi\sin\varphi, \cos \varphi)$, with longitude $\psi\in [0,2\pi)$ and polar angle $\varphi\in [0,\pi]$, where the \emph{polar angle} $\varphi$ is the angle between $\mathbf{v}$ and $\mathbf{n}$. 
Using this parameterization, a spherical zone is a Cartesian product $[0,2\pi)\times I$ for some circular arc $I\subset [0,\pi]$. In the remainder of the proof, we associate each spherical zone with such a circular arc $I$.

We define additions and subtraction on polar angles $\alpha,\beta\in [0,\pi]$ by 
%
%\vspace{-\baselineskip} % Negative vspace needed to counter "lineno" errors
\[
\alpha\oplus\beta=\min\{\alpha+\beta,2\pi-(\alpha+\beta)\}, \,\,
\alpha\ominus\beta=\max\{\alpha-\beta,\beta-\alpha\};
\]
 see Figure~\ref{fig:param} (right). (This may be interpreted as addition mod $2\pi$, restricted to the quotient space defined by the equivalence relation $\varphi\sim 2\pi-\varphi$.)
\begin{figure}[htb]
\centering
\includegraphics[scale=0.7]{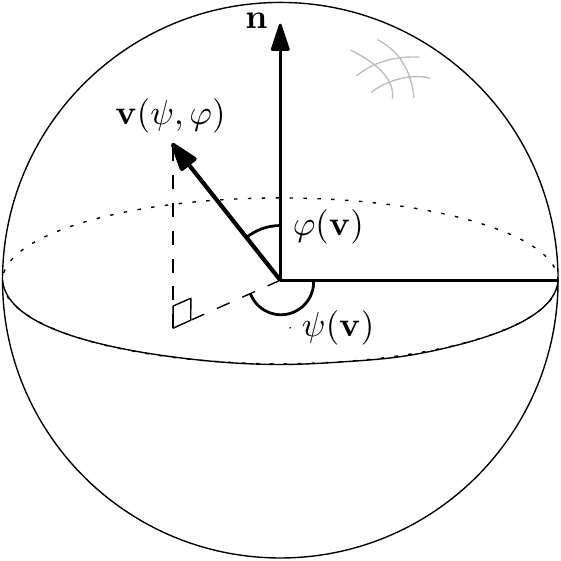} \hspace{30pt}
\includegraphics[scale=0.65]{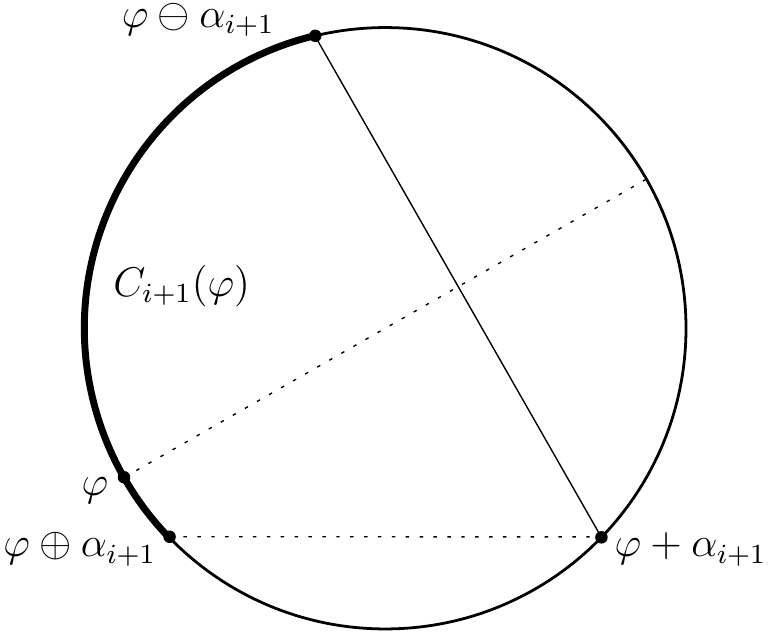}
\caption{Parametrization of the unit vectors (left). Circular arc $C_{i+1}(\varphi)$ (right).}
\label{fig:param}
\end{figure}

We show that $U_i$ is a spherical zone for all $i\in \{0,\ldots , n-1\}$, and show how to compute the intervals $I_i\subset [0,\pi]$ efficiently. First note that $U_0$ is a circle at (spherical) distance $\alpha_0$ from $\mathbf{n}$, hence $U_0$ is a spherical zone with $I_0=[\alpha_0,\alpha_0]$. 

Assume that $U_i$ is a spherical zone associated with $I_i\subset [0,\pi]$. 
Let $\mathbf{u}_i\in U_i$, where $\mathbf{u}_i=\mathbf{v}(\psi,\varphi)$ with $\psi\in [0,2\pi)$ and $\varphi\in I_i$. By the definition $U_i$, there exists a polygonal line $(\mathbf{n},\mathbf{u}_0,\ldots , \mathbf{u}_i)$ with edge lengths $\alpha_0,\ldots , \alpha_i$. The locus of points in $\mathbb{S}^2$ at distance $\alpha_{i+1}$ from $u_i$ is a circle; the polar angles of the points in the circle form an interval $C_{i+1}(\varphi)$. 
Specifically (see Figure~\ref{fig:param} (right)), we have
%
%\vspace{-\baselineskip} % Negative vspace needed to counter "lineno" errors
\[
C_{i+1}(\varphi)=[\min\{\varphi\ominus\alpha_{i+1},\varphi\oplus\alpha_{i+1} \},\max\{\varphi\ominus\alpha_{i+1},\varphi\oplus\alpha_{i+1}\}].
\]
By rotational symmetry, $U_{i+1}=[0,2\pi)\times I_{i+1}$, where 
$I_{i+1}= \bigcup_{\varphi\in I_i} C_{i+1}(\varphi)$. Consequently, $I_{i+1}\subset [0,\pi]$ is  connected, and hence, $I_{i+1}$  is an interval. Therefore  $U_{i+1}$ is a spherical zone. 
As $\varphi\oplus \alpha_{i+1}$ and $\varphi\ominus \alpha_{i+1}$ are 
piecewise linear functions of $\varphi$, we can compute $I_{i+1}$ using $O(1)$  arithmetic operations. 
 
% \begin{figure}[ht]
%\centering
%\includegraphics[scale=1]{}
%\caption{Construction of $I_{i+1}$ from $I_{i}$.}
%\label{fig:interval}
%\end{figure}

We can construct the intervals $I_0,\ldots , I_{n-1}\subset [0,\pi]$ as described above.
If $0\notin I_{n-1}$, then $\mathbf{n}\notin U_{n-1}$ and $A$ is not realizable. Otherwise, we can compute the vertices of a spherical realization $P'\subset \mathbb{S}^2$ by backtracking. Put $\mathbf{u}_{n-1}=\mathbf{n}=(0,0,1)$. 
Given $\mathbf{u}_i=\mathbf{v}(\psi,\varphi)$, we choose $\mathbf{u}_{i-1}$ as follows. 
Let $\mathbf{u}_{i-1}$ be $\mathbf{v}(\psi,\varphi\oplus \alpha_i)$ or $\mathbf{v}(\psi,\varphi\ominus\alpha_i)$ if either of them is in $U_{i-1}$ (break ties arbitrarily). Else the spherical circle  of radius $\alpha_{i}$ centered at $\mathbf{u}_i$ intersects the boundary of $U_{i-1}$, and then we choose $\mathbf{u}_{i-1}$ to be an arbitrary such intersection point. The decision algorithm (whether $0\in I_{n-1}$) and the backtracking both use $O(n)$ arithmetic operations. 
%run in weakly polynomial time.
%
% Here, we compute in $\mathbb{R}$, but consider the result to belong to $I$, which is just $[ 0,\pi]$, see Figure~\ref{fig:strip} for an illustration.
% Hence, $I_{i+1}=\bigcup_{\alpha\in I_{i}} [
% -\alpha_{i+1}+\alpha \pmod{2\pi},   \alpha_{i+1}+\alpha \pmod{2\pi}] \cup [
% \alpha_{i+1}+\alpha \pmod{2\pi},   -\alpha_{i+1}+\alpha \pmod{2\pi}].$
% (In the previous expression, one of the intervals in every pair is empty, unless trivial.)
% It can be checked that $I_i$ are indeed all subintervals and that they can be computed efficiently, see Figure~\ref{fig:interval} for an illustration. To this end we compute the interval $I_i^+=\alpha_{i+1}+I_i$
% and $I_i^-=-\alpha_{i+1}+I_i$ (mod $2\pi$ and identifying $x$ with $2\pi-x$ in the result %for all $x>\pi$). 
% Then $I_{i+1}=[ \min\{I_i^-,I_i^+\}, \max\{I_i^-,I_i^+\} ]$.
% This concludes the construction of subintervals $I_i$'s.
\end{proof}

 \paragraph{Enclosing the Origin.} 
Theorem~\ref{thm:3d_realization_sphere} provides an efficient algorithm to test whether an angle sequence can be realized by a spherical polygon, however, Lemma~\ref{lemma:3d_realization_char} requires a spherical polygon $P'$ whose convex hull contains the origin in its relative interior. We show that this is always possible if a realization exists and $\sum_{i=0}^{n-1}\alpha_i\geq 2\pi$.
  The general strategy in the inductive proof of this claim (Lemma~\ref{lemma:convex-hull} below) is to incrementally modify $P'$ by changing the turning angle at one of its vertices to $0$ or $\pi$. This allows us to reduce the number of vertices of $P'$ and apply induction.

Before we are ready to prove Lemma~\ref{lemma:convex-hull}  we need to do some preliminary work. First, we introduce some terminology for spherical polygonal linkages with one fixed endpoint.
 Let $P'=(\mathbf{u}_0,\ldots, \mathbf{u}_{n-1})$ be a polygon in $\mathbb{S}^2$ that realizes an angle sequence $A=(\alpha_0,\ldots, \alpha_{n-1})$; 
 we do not assume $\sum_{i=0}^{n-1}\alpha_i\geq 2\pi$. 
Denote by $U_i^{j-}$ the locus of the endpoints $\mathbf{u}_i'\in \mathbb{S}^2$ of all (spherical) polygonal lines $(\mathbf{u}_{i-j},\mathbf{u}_{i-j+1}',\ldots ,\mathbf{u}_i')$, where the first vertex is fixed at $\mathbf{u}_{i-j}$, and the edge lengths are $\alpha_{i-j},\ldots , \alpha_{i}$.  Similarly, denote by $U_i^{j+}$ the locus of the endpoints $\mathbf{u}_i'\in \mathbb{S}^2$ of all (spherical) polygonal lines $(\mathbf{u}_{i+j},\mathbf{u}_{i+j-1}',\ldots ,\mathbf{u}_i')$ with edge lengths $\alpha_{i+j+1},\ldots , \alpha_{i+1}$. Due to rotational symmetry about the line passing through $\mathbf{u}_{i-j}$ and $\mathbf{0}$, the sets $U_i^{j-}$ and $U_i^{j+}$ are each a \emph{spherical zone} (i.e., a subset of $\mathbb{S}^2$ bounded by two parallel circles), possibly just a circle, or a cap, or a point.  
 %  (We will use the previous definition only with $j=1,2$.)
 In particular, the distance between $\mathbf{u}_i$ and any boundary component (circle) of $U_i^{j-}$ or $U_i^{j+}$ is the same; see Fig.~\ref{fig:strip}.

 \begin{figure}[htb]
 \centering
 \includegraphics[scale=1]{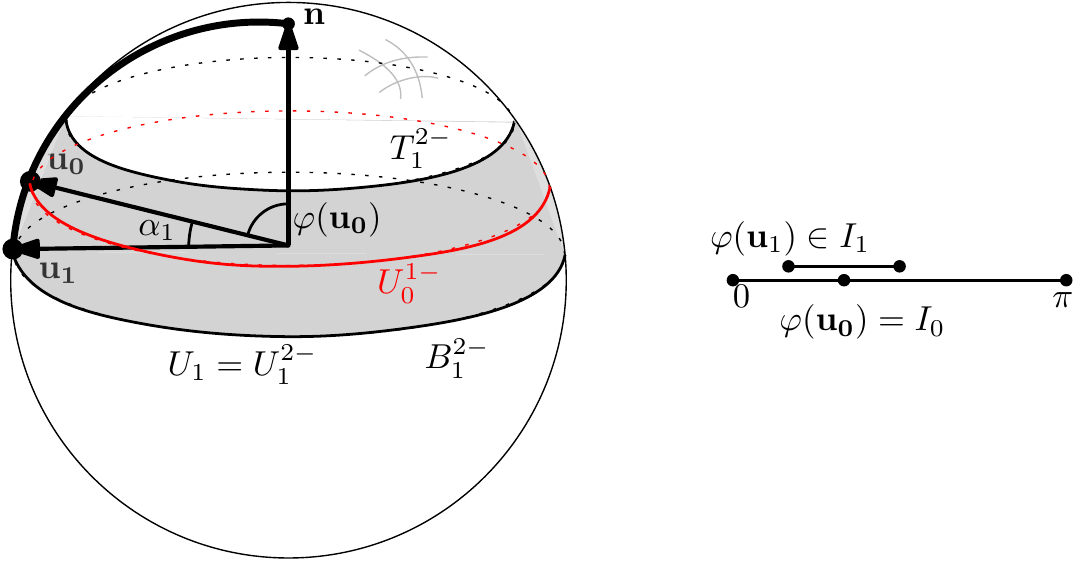}
 \caption{The spherical zone $U_1$ (or $U_1^{2-}$) containing ${\bf u_1}$ corresponding to $I_1$.}
 \label{fig:strip}
 \end{figure}

 If $U_i^{2+}$ is bounded by two circles, let $T_i^{2+}$ and $B_i^{2+}$ denote the two boundary circles such that $\mathbf{u}_i$ is closer to $T_i^{2+}$ than to $B_i^{2+}$. 
 If $U_i^{2+}$ is a cap, let $T_i^{2+}$ denote the boundary of $U_i^{2+}$, and let  $B_i^{2+}$ denote the center of $U_i^{2+}$. We define $T_i^{2-}$ and $B_i^{2-}$ analogously.

 The vertex $\mathbf{u}_i$ of $P'$ is a \emph{spur} of $P'$ if the segments $\mathbf{u}_i\mathbf{u}_{i+1}$ and $\mathbf{u}_i\mathbf{u}_{i-1}$ overlap
 (equivalently, the turning angle of $P'$ at $\mathbf{u}_i$ is $\pi$).
 We use the following simple but crucial observation. 
 \begin{observation}\label{obs:perturb}
 Assume that $n\ge 4$ and $U_i^{2+}$ is neither a circle nor a point.  
 The turning angle of $P'$ at $\mathbf{u}_{i+1}$ is 0 iff  $\mathbf{u}_{i}\in  B_i^{2+}$; 
 and $\mathbf{u}_{i+1}$ is a spur of $P'$ iff $\mathbf{u}_i\in T_i^{2+}$. (By symmetry, the same holds if we replace $+$ with $-$.)
 \end{observation}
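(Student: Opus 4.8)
The plan is to recognize $U_i^{2+}$ as the \emph{workspace} of a spherical two-bar linkage and to control it with the spherical law of cosines. Fix the tail $\mathbf{u}_{i+2}$ and regard the two edges $\mathbf{u}_{i+2}\mathbf{u}_{i+1}$ and $\mathbf{u}_{i+1}\mathbf{u}_i$, of spherical lengths $\alpha_{i+2}$ and $\alpha_{i+1}$, as a chain of two rigid bars hinged at $\mathbf{u}_{i+1}$. Since $U_i^{2+}$ is rotationally symmetric about the axis through $\mathbf{u}_{i+2}$ and $\mathbf{0}$, every point of $U_i^{2+}$ is determined by its spherical distance $d=d(\mathbf{u}_{i+2},\mathbf{u}_i)$ to $\mathbf{u}_{i+2}$, and the two boundary components of $U_i^{2+}$ are the loci at spherical distance $\alpha_{i+2}\ominus\alpha_{i+1}$ (the near one, which I identify with $T_i^{2+}$) and $\alpha_{i+2}\oplus\alpha_{i+1}$ (the far one, $B_i^{2+}$) from $\mathbf{u}_{i+2}$, with the understanding that in the cap case the far locus of distance $\alpha_{i+2}\oplus\alpha_{i+1}=\pi$ degenerates to the antipode of $\mathbf{u}_{i+2}$, which is the center $B_i^{2+}$. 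The hypothesis $n\ge 4$ guarantees that $i,i+1,i+2$ are distinct indices, so the linkage genuinely has two bars.

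The heart of the argument is a monotonicity statement: $d$ is a strictly monotone function of the turning angle of $P'$ at $\mathbf{u}_{i+1}$. Writing $\gamma$ for the interior spherical angle $\angle\mathbf{u}_{i+2}\mathbf{u}_{i+1}\mathbf{u}_i$, the turning angle at $\mathbf{u}_{i+1}$ equals the deflection $\pi-\gamma$ (so a straight continuation gives $0$ and a spur gives $\pi$), and the spherical law of cosines for the triangle $\mathbf{u}_{i+2}\mathbf{u}_{i+1}\mathbf{u}_i$ gives
\begin{equation*}
\cos d=\cos\alpha_{i+2}\cos\alpha_{i+1}+\sin\alpha_{i+2}\sin\alpha_{i+1}\cos\gamma .
\end{equation*}
Because $\alpha_{i+1},\alpha_{i+2}\in(0,\pi)$ we have $\sin\alpha_{i+1}\sin\alpha_{i+2}>0$, so the right-hand side is strictly decreasing in $\gamma$ on $[0,\pi]$; since $d\mapsto\cos d$ is strictly decreasing on $[0,\pi]$ as well, $d$ is a strictly increasing function of $\gamma$, hence a strictly decreasing function of the turning angle $\pi-\gamma$.

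It then suffices to evaluate the two extremes. Turning angle $0$ means $\gamma=\pi$, for which the identity collapses to $\cos d=\cos(\alpha_{i+2}+\alpha_{i+1})$, i.e.\ $d=\alpha_{i+2}\oplus\alpha_{i+1}$, the maximal distance; by the monotonicity this value is attained \emph{only} at turning angle $0$, so the turning angle at $\mathbf{u}_{i+1}$ is $0$ iff $\mathbf{u}_i$ lies on the far boundary component $B_i^{2+}$ (the antipodal point in the cap case). Symmetrically, $\mathbf{u}_{i+1}$ being a spur means $\gamma=0$ (the two segments overlap), giving $\cos d=\cos(\alpha_{i+2}-\alpha_{i+1})$, i.e.\ the minimal distance $d=\alpha_{i+2}\ominus\alpha_{i+1}$, again attained only in the spur configuration, so $\mathbf{u}_{i+1}$ is a spur iff $\mathbf{u}_i\in T_i^{2+}$. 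The statement with $-$ in place of $+$ is identical after replacing the anchor $\mathbf{u}_{i+2}$ by $\mathbf{u}_{i-2}$ and the bar lengths by $\alpha_i,\alpha_{i-1}$.

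I expect the only real friction to be bookkeeping rather than mathematics: making sure the turning-angle convention (the deflection $\pi-\gamma$, so a spur is $\pi$ and a straight continuation is $0$) is matched to the correct extreme of the linkage, and correctly handling the degenerate \emph{cap} case where the far boundary $B_i^{2+}$ shrinks to the single antipodal point, since there the assertion $\mathbf{u}_i\in B_i^{2+}$ means an exact coincidence rather than incidence with a circle. The hypothesis that $U_i^{2+}$ is neither a circle nor a point rules out the fully degenerate situations in which $T_i^{2+}$ and $B_i^{2+}$ coincide or collapse, which is exactly what is needed for the two extreme distances $\alpha_{i+2}\ominus\alpha_{i+1}$ and $\alpha_{i+2}\oplus\alpha_{i+1}$ to be genuinely distinct.
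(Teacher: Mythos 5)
The paper states this observation without proof --- it is introduced only as a ``simple but crucial observation'' --- so there is no argument of record to compare against; your write-up supplies the missing justification, and it is correct. The spherical law of cosines for the two-bar linkage anchored at $\mathbf{u}_{i+2}$ shows that $d(\mathbf{u}_{i+2},\mathbf{u}_i)$ is a strictly monotone function of the turning angle at $\mathbf{u}_{i+1}$ (strict because $\sin\alpha_{i+1}\sin\alpha_{i+2}>0$), so the extreme turning angles $0$ and $\pi$ are attained exactly on the extreme loci $d=\alpha_{i+2}\oplus\alpha_{i+1}$ and $d=\alpha_{i+2}\ominus\alpha_{i+1}$, and the hypothesis that $U_i^{2+}$ is neither a circle nor a point is exactly what keeps these two loci distinct. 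One caveat worth recording: the paper's definition of $T_i^{2+}$ and $B_i^{2+}$ (``$\mathbf{u}_i$ is closer to $T_i^{2+}$,'' and in the cap case ``$B_i^{2+}$ is the center'') is ambiguous and, read literally, inconsistent with the observation itself; your reading --- $T_i^{2+}$ is the boundary component nearer the anchor $\mathbf{u}_{i+2}$ and $B_i^{2+}$ the farther one --- is the interpretation under which the statement is true. Note that you only treat the cap centered at the antipode of $\mathbf{u}_{i+2}$ explicitly; in the other degenerate cap, $\alpha_{i+1}=\alpha_{i+2}$, the \emph{near} locus collapses to the point $\mathbf{u}_{i+2}$ (which is then the center of the cap), so there the paper's stated convention (boundary $=T$, center $=B$) would falsify both equivalences while your near/far rule still gives the correct statement. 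This is a defect of the paper's bookkeeping rather than of your proof, but it deserves a sentence if this argument is to be incorporated.
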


 \begin{lemma}
 \label{lemma:straigthening}
 Let $P'$ be a spherical polygon $({\bf u}_0,\ldots, {\bf u}_{n-1})$, $n\ge 4$, that realizes an angle sequence $A=(\alpha_0,\ldots , \alpha_{n-1})$.
 Then there exists a spherical polygon $P''=({\bf u}_0,\ldots, {\bf u}_{i-1},{\bf u}_i',{\bf u}_{i+1}',{\bf u}_{i+2},\ldots, {\bf u}_{n-1})$ that also realizes $A$ such that 
 the turning angle at ${\bf u}_{i-1}$ is $0$, or the turning angle at  ${\bf u}_{i+1}'$ is $0$ or $\pi$.
  \end{lemma}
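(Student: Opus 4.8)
For the given index $i$, the plan is to keep the two vertices $\mathbf{u}_{i-1}$ and $\mathbf{u}_{i+2}$ fixed and to move only the consecutive pair $(\mathbf{u}_i,\mathbf{u}_{i+1})$ continuously, preserving all three incident edge lengths $\alpha_i,\alpha_{i+1},\alpha_{i+2}$, until one of the prescribed turning angles degenerates. Any admissible position of the relocated vertex $\mathbf{u}_i'$ must satisfy two constraints: the edge $\mathbf{u}_{i-1}\mathbf{u}_i'$ must have length $\alpha_i$, so $\mathbf{u}_i'$ lies on the circle $C=U_i^{1-}$ of spherical radius $\alpha_i$ centered at $\mathbf{u}_{i-1}$; and the two-link chain $(\mathbf{u}_{i+2},\mathbf{u}_{i+1}',\mathbf{u}_i')$ with edge lengths $\alpha_{i+2},\alpha_{i+1}$ must close up, i.e.\ $\mathbf{u}_i'\in U_i^{2+}$. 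Thus the admissible positions form $C\cap U_i^{2+}$, which is nonempty since it contains the original $\mathbf{u}_i$. I would first record that $U_i^{2+}$ is a genuine (nondegenerate) spherical zone: since $\alpha_{i+1},\alpha_{i+2}\in(0,\pi)$, the polar-angle interval $[\alpha_{i+2}\ominus\alpha_{i+1},\,\alpha_{i+2}\oplus\alpha_{i+1}]$ has positive length, so $U_i^{2+}$ is neither a circle nor a point and Observation~\ref{obs:perturb} is applicable.

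Next I would set up the motion. If $\mathbf{u}_i\in\partial U_i^{2+}=T_i^{2+}\cup B_i^{2+}$ already, then Observation~\ref{obs:perturb} says the turning angle at $\mathbf{u}_{i+1}$ is $\pi$ or $0$, and I would simply take $P''=P'$. Otherwise $\mathbf{u}_i\in\mathrm{int}(U_i^{2+})$, so a neighborhood of $\mathbf{u}_i$ on $C$ is admissible; I would move $\mathbf{u}_i'$ along $C$ starting from $\mathbf{u}_i$, choosing the mate $\mathbf{u}_{i+1}'$ along one continuous branch of the two-link linkage, and stop at the first of two events: (a) $\mathbf{u}_i'$ meets $\partial U_i^{2+}$, where Observation~\ref{obs:perturb} yields turning angle $\pi$ (if $\mathbf{u}_i'\in T_i^{2+}$) or $0$ (if $\mathbf{u}_i'\in B_i^{2+}$) at $\mathbf{u}_{i+1}'$; or (b) the turning angle at $\mathbf{u}_{i-1}$ reaches $0$. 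In either case the resulting $P''=(\mathbf{u}_0,\ldots,\mathbf{u}_{i-1},\mathbf{u}_i',\mathbf{u}_{i+1}',\mathbf{u}_{i+2},\ldots,\mathbf{u}_{n-1})$ realizes $A$, because the only altered edges $\mathbf{u}_{i-1}\mathbf{u}_i'$, $\mathbf{u}_i'\mathbf{u}_{i+1}'$, $\mathbf{u}_{i+1}'\mathbf{u}_{i+2}$ keep lengths $\alpha_i,\alpha_{i+1},\alpha_{i+2}$ by construction.

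The crux, and the step I expect to be the main obstacle, is proving that one of the events (a), (b) is always reached, i.e.\ that the motion cannot proceed indefinitely with no angle degenerating. I would handle this by a dichotomy. If $\mathbf{u}_i'$ never meets $\partial U_i^{2+}$ during a full traversal, then $C\subseteq U_i^{2+}$; and as $\mathbf{u}_i'$ sweeps once around $C$, the geodesic direction at $\mathbf{u}_{i-1}$ towards $\mathbf{u}_i'$ rotates through the full angle $2\pi$ (here $0<\alpha_i<\pi$ makes $C$ an honest circle). Consequently this direction coincides at some moment with the forward extension of the incoming edge $\mathbf{u}_{i-2}\mathbf{u}_{i-1}$ (well defined because $\alpha_{i-1}>0$ gives $\mathbf{u}_{i-2}\neq\mathbf{u}_{i-1}$, and $n\ge 4$ guarantees $\mathbf{u}_{i-2}$ is a vertex), which is exactly event (b). Otherwise $C$ meets $\partial U_i^{2+}$, and moving from the interior point $\mathbf{u}_i$ towards the nearer crossing triggers event (a).

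The remaining care-points are routine. Continuity of the chosen linkage branch is unproblematic since the two solutions for $\mathbf{u}_{i+1}'$ merge precisely on $\partial U_i^{2+}$, which is consistent with Observation~\ref{obs:perturb}; and the cap degeneracy in which $B_i^{2+}$ is a single point (a pole of the zone) is covered directly by the observation, so event (a) still yields turning angle $0$ at $\mathbf{u}_{i+1}'$ there. With the guarantee of a stopping event established, the construction produces the desired $P''$.
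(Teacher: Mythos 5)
Your argument is correct and is essentially the paper's own proof: both rest on the fact that the circle $U_i^{1-}$ is connected and contains ${\bf u}_i\in U_i^{2+}$, so either it lies entirely inside $U_i^{2+}$ (in which case one can place ${\bf u}_i'$ so that the chain straightens at ${\bf u}_{i-1}$) or it meets $T_i^{2+}\cup B_i^{2+}$ (in which case Observation~\ref{obs:perturb} gives turning angle $\pi$ or $0$ at ${\bf u}_{i+1}'$). The paper states this as a static trichotomy on $U_i^{1-}$ versus $U_i^{2+}$ rather than as a continuous motion with a first-event stop, but the content is the same.
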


 \begin{proof}
 If $n\ge 4$, Observation~\ref{obs:perturb} allows us to move vertices $\mathbf{u}_i$ and $\mathbf{u}_{i+1}$ so that the turning angle at ${\bf u}_{i-1}$ drops to 0, or the turning angle at ${\bf u}_{i+1}$ changes to 0 or $\pi$, while all other vertices of $P'$ remain fixed. 
 Indeed, one of the following three options holds:  $U_i^{1-}\subseteq U_i^{2+}$,  $U_i^{1-}\cap B_i^{2+}\not=\emptyset$, or $U_i^{1-}\cap T_i^{2+}\not=\emptyset$. 
 If $U_i^{1-}\subseteq U_i^{2+}$, 
 then by Observation~\ref{obs:perturb} there exists ${\bf u}_i'\in U_i^{1-}\cap B_i^{2-}\cap  U_i^{2+}$. Since  ${\bf u}_i'\in  U_i^{2+}$ there exists ${\bf u}_{i+1}'\in   U_{i+1}^{1+}$
 such that $P''=({\bf u}_0,\ldots, {\bf u}_{i-1},{\bf u}_i',{\bf u}_{i+1}',{\bf u}_{i+2},\ldots, {\bf u}_{n-1})$ realizes $A$ and the turning angle at ${\bf u}_{i-1}$ equals 0.
 Similarly, if there exists ${\bf u}_i'\in U_i^{1-}\cap B_i^{2+}$ or ${\bf u}_i'\in U_i^{1-}\cap T_i^{2+}$, then there exists ${\bf u}_{i+1}'\in U_{i+1}^{1+}$ such that  $P''$ as above realizes $A$ with the turning angle at ${\bf u}_{i+1}$ equal to 0 or $\pi$, respectively.
 \end{proof}

 We are now ready to prove the lemma stated below.

\begin{lemma}\label{lemma:convex-hull}
Given a spherical polygon $P'$ that realizes an angle sequence $A=(\alpha_0,\ldots ,\alpha_{n-1})$, $n\ge 3$,  with $\sum_{i=0}^{n-1}\alpha\geq 2\pi$, we can compute in polynomial time a spherical polygon $P''$ realizing $A$ such that $\mathbf{0}\in \mathrm{relint}(\mathrm{conv}(P''))$.
\end{lemma}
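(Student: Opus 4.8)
The plan is to induct on $n$, using the standard reformulation that $\mathbf{0}\in\mathrm{relint}(\mathrm{conv}(P'))$ iff $\mathbf{0}$ is a \emph{strictly} positive convex combination of the vertices $\mathbf{u}_0,\ldots,\mathbf{u}_{n-1}$ (the same criterion invoked in Lemma~\ref{lemma:3d_realization_char} and by Garg~\cite{G98_anglegraphs}). Two situations I would dispatch without induction. First, the base case $n=3$: since a spherical triangle whose vertices are off a common great circle has perimeter below $2\pi$, the hypotheses force $\sum_i\alpha_i=2\pi$, and then $\mathbf{0}\in\mathrm{relint}$ of the (necessarily planar) hull is achieved by laying three arcs of lengths $\alpha_0,\alpha_1,\alpha_2<\pi$ once around a great circle. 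Second, the planar case $\sum_i\alpha_i=2\pi$ for arbitrary $n$: place $\mathbf{u}_i$ at polar angle $\beta_i=\sum_{j\le i}\alpha_j$ on a fixed great circle, so consecutive arcs have the prescribed lengths $\alpha_i<\pi$, the directions close up because the sum is $2\pi$, and since every gap is $<\pi$ no closed half-circle contains all $\mathbf{u}_i$, whence $\mathbf{0}$ lies in the interior of the planar hull. More generally, any great-circle realization reduces origin-enclosure to Garg's signed angle criterion.

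For the inductive step I would take $n\ge4$ and assume $P'$ does not already enclose $\mathbf{0}$. I apply Lemma~\ref{lemma:straigthening} to obtain a realization of $A$ in which some vertex $w$ has turning angle $0$ or $\pi$; by Observation~\ref{obs:perturb} this places $w$ and its two neighbours $p,q$ on a common great circle. I then delete $w$, replacing its two incident edges by the single geodesic arc $pq$ they span (length $\alpha_i+\alpha_{i+1}$ when the angle is $0$, and $|\alpha_i-\alpha_{i+1}|$ when $w$ is a spur). This produces a shorter sequence $A'$, still realized by a spherical polygon, to which I apply the induction hypothesis to get a realization $P'''$ of $A'$ with $\mathbf{0}\in\mathrm{relint}(\mathrm{conv}(P'''))$.

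Re-inserting $w$ is the clean part. In $P'''$ the neighbours $p,q$ are already placed, and I put $w$ back on their great circle at the prescribed arc-distance, so $w\in\mathrm{span}\{p,q\}$. Since $\mathbf{0}\in\mathrm{relint}(\mathrm{conv}(P'''))\subseteq\mathrm{aff}(P''')$ and $p,q\in\mathrm{aff}(P''')$, we get $\mathrm{span}\{p,q\}=\mathrm{aff}\{\mathbf{0},p,q\}\subseteq\mathrm{aff}(P''')$, so adding $w$ leaves the affine hull unchanged. Hence $\mathrm{conv}(P''')\subseteq\mathrm{conv}(P'''\cup\{w\})$ have equal affine hull, giving $\mathrm{relint}(\mathrm{conv}(P'''))\subseteq\mathrm{relint}(\mathrm{conv}(P'''\cup\{w\}))$, so $\mathbf{0}$ stays in the relative interior and we obtain the desired realization of $A$. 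Each step is plainly polynomial, yielding the algorithmic claim.

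The hard part will be maintaining the two invariants required to recurse: every edge length must stay in $(0,\pi)$ to remain a spherical segment, and the angle sum must stay $\ge2\pi$ so the hypothesis still applies. The angle-$0$ merge preserves the sum but can create an over-length edge $\alpha_i+\alpha_{i+1}\ge\pi$, whereas the spur fold keeps edges short but lowers the sum by $2\min(\alpha_i,\alpha_{i+1})$, possibly below $2\pi$ (already for $n=4$ with all $\alpha_i=\pi-\varepsilon$). Since Lemma~\ref{lemma:straigthening} only guarantees \emph{some} degenerate vertex rather than one realizing a safe reduction, the crux is a case analysis showing that whenever no sum-preserving, length-feasible reduction exists, the configuration can instead be flattened onto a single great circle and handed to the planar base case above. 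Establishing this dichotomy, and checking that the flattening can always be performed with the edge lengths held fixed, is where I expect the genuine work to lie.
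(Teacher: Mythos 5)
Your skeleton coincides with the paper's announced strategy (induct on $n$; use Lemma~\ref{lemma:straigthening} to force a turning angle of $0$ or $\pi$, suppress the degenerate vertex, recurse, re-insert), and your re-insertion argument via affine hulls is fine. But the proof is incomplete precisely at the point you flag yourself: you never establish that a \emph{feasible} reduction exists, and the escape hatch you propose for when it does not --- ``flatten the configuration onto a single great circle and hand it to the planar base case'' --- is neither justified nor what actually works. A spherical polygon is in general not isotopic, with edge lengths held fixed, to one lying in a great circle; and even when a great-circle realization exists, your planar base case only covers $\sum_i\alpha_i=2\pi$ exactly, whereas a great-circle realization in general requires a \emph{signed} sum $\equiv 0\pmod{2\pi}$ together with Garg's positivity condition, which is a nontrivial realizability question in its own right.

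The paper resolves exactly these obstructions, and this is where essentially all of its work lies. When the angle-$0$ merge would create an arc of length $\ge\pi$, it does not flatten: it locates a sub-path containing a half-circle and \emph{rotates} one of the two sub-polylines about the axis through the antipodal pair (perturbing a turning angle away from $0$ if necessary) so that a vertex leaves the bounding hemisphere, which encloses $\mathbf{0}$ directly without any recursion (Case~1 there). For the spur reduction it chooses the vertex so that the adjacent angle $\alpha_{i+2}$ is \emph{minimal}, guaranteeing the folded edge has nonnegative length; it treats $\alpha_{i+1}=\alpha_{i+2}$ (your zero-length edge, which you do not mention) by a further rotation creating a turning angle $0$; and when the fold produces a ``crimp'' it either re-straightens a $3$-arc sub-path (if $\alpha_{i+1}+\alpha_{i+3}<\pi$) or applies the induction hypothesis to a $4$-vertex sub-polygon of perimeter $\ge 2\pi$ and splices the result back in --- so the recursion is not only on $n-1$ vertices, contrary to your scheme. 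There is also a separate argument for $n=4$, where your generic step does not apply. In short, you have correctly identified the crux and the correct invariants to maintain, but the dichotomy you defer to is the theorem's actual content, and the mechanism needed to close it (rotation about an antipodal axis in the half-circle case, the minimal-angle choice, and the crimp analysis) is absent from your proposal.
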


 \begin{proof}
 We proceed by induction on the number of vertices of $P'$.
 In the basis step, we have $n=3$. In this case, $P'$ is a spherical triangle.
 The length of every spherical triangle is at most $2\pi$, contradicting the assumption that $\sum_{i=0}^{n-1}\alpha_i>2\pi$. Hence the claim vacuously holds. 

 In the induction step, assume that $n\geq 4$ and the claim holds for smaller values of $n$. 
 Assume $\mathbf{0}\notin \mathrm{relint}(\mathrm{conv}(P'))$, otherwise the proof is  complete. We distinguish between several cases. 

 \medskip\noindent \textbf{Case~1: a path of consecutive edges  lying in a great circle
  contains a half-circle.} We may assume w.l.o.g. that at least one endpoint of the half-circle is a vertex of $P'$. Since the length of each edge is less than $\pi$, the path that contains a half-circle has at least 2 edges.  

\medskip\noindent \textbf{Case~1.1: both endpoints of the half-circle are vertices of $P'$}. 
 Assume w.l.o.g., that the two endpoints of the half-circle are $\mathbf{u}_i$ and $\mathbf{u}_j$, for some $i<j$. These vertices decompose $P'$ into two polylines, $P'_1$ and $P'_2$. We rotate $P'_2$ about the line through $\mathbf{u}_i\mathbf{u}_j$ so that the turning angle at $\mathbf{u}_i$ is a suitable value in $[-\varepsilon,+\varepsilon]$ as follows. First, set the turning angle at $\mathbf{u}_i$ to be $0$. Let $P''$  denote the   resulting polygon.
  If  $\mathbf{0}\in \mathrm{int}(\mathrm{conv}(P''))$ we are done.
If  $P''$ is contained in a great circle then $\mathbf{0}\in \mathrm{int}(\mathrm{conv}(P''))$ due to the  angle $0$ at ${\bf u}_i$, and we are done as well. Else, $P''$ is contained in a hemisphere $H$ bounded by the great circle through $\mathbf{u}_{i-1}\mathbf{u}_i\mathbf{u}_{i+1}$. In this case, we perturb the turning angle at $\mathbf{u}_i$ so that $\mathbf{u}_{i+1}$ is not contained in $H$ thereby achieving $\mathbf{0}\in \mathrm{int}(\mathrm{conv}(P''))$.
 
 \medskip\noindent \textbf{Case~1.2: only one endpoint of the half-circle is a vertex of $P'$.} 
 Let $P'_1=(\mathbf{u}_i,\ldots, \mathbf{u}_j)$ be the longest path in $P'$ that contains a half-circle, and lies in a great circle. 
 Since $\mathbf{0}\notin \mathrm{relint}(\mathrm{conv}(P'))$, the polygon $P'$ is contained in a hemisphere $H$ bounded by the great circle $\partial H$ that contains $P'_1$, but $P'$ is not contained in $\partial H$. By construction of $P_1'$, we have $\mathbf{u}_{j+1}\notin\partial H$.
 In order to make the proof in this case easier, we make the following assumption.
 If a part $P_0$ of $P'$ between two antipodal/identical vertices that belong $\partial H$ is contained in a great circle, w.l.o.g.\ we assume that $P_0$  is contained in $\partial H$. (This can be achieved by a suitable rotation about the line passing through the endpoints of $P_0$.)

Assume, w.l.o.g.\ that the second endpoint of $P_1'$ is ${\bf u}_0$, that is, $j=0$. Let $j'$ be the smallest value such that ${\bf u}_{j'}\in \partial  H$. Since $\mathbf{0}\notin \mathrm{relint}(\mathrm{conv}(P'))$, we have ${\bf u}_{0},\ldots, {\bf u}_{j'}\in  H$. We show that we can perturb the polygon $P'$ into a new polygon $P''=({\bf u}_{0}',\ldots, {\bf u}_{j'-1}', {\bf u}_{j'},\ldots, {\bf u}_{n-1})$ realizing $A$ so that $\mathbf{0}\in \mathrm{int}(\mathrm{conv}(P''))$. Since   $({\bf u}_{0},\ldots, {\bf u}_{j'})$
   is not contained in a great circle by our assumption, there exists $j'', \ 0<j''<j'$, such that the turning angle of $P_1'$ at $j''$ is neither 0 nor $\pi$. We prove in the next paragraph that we can assume that $j''=1$.
   
   Suppose that $j''>1$. We perturb the polygon $P'$ thereby lowering its  value $j''$, while still keeping $P'$ a realization of $A$.
   By Observation~\ref{obs:perturb}, $\mathbf{u}_{j''-1}\notin \partial U_{j''-1}^{2+}$. Since the turning angle at $\mathbf{u}_{j''-1}$ is either 0 or $\pi$.  Note that $U_{j''-1}^{2+}$ is the union  of the spherical circles $S_{\bf c}$ of radius $\alpha_{j''-1}$ with centers ${\bf c}$ on $U_{j''}^{1+}$.
   Since $\mathbf{u}_{j''-1}\notin \partial U_{j''-1}^{2+}$, there exists a circle $S_{\bf c}$ that intersects $U_{j''-1}^{1-}$ in two different points ${\bf p}_1$ and ${\bf p}_2$. We replace ${\bf u}_{j''}$
   with ${\bf c}$ and ${\bf u}_{j''-1}$ with ${\bf p}_1$ on $P'$ thereby  still keeping $P'$ a realization of $A$. In the modified polygon $P'$, the turning angle at ${\bf u}_{j''-1}={\bf p}_1$ is neither 0 nor $\pi$.
   
   By~Observation~\ref{obs:perturb} and the assumption $j''=1$, we have $\mathbf{u}_{0}\notin \partial U_{0}^{2+}$, and we can perturb  ${\bf u}_0$ within $ U_{0}^{2+}$ into  ${\bf u}_0'$  and  ${\bf u}_1$ into ${\bf u}_1'$ so that ${\bf u}_{0}'\notin H$, and ${\bf u}_{1}',{\bf u}_2\ldots,{\bf u}_{j'-1}\in \mathrm{relint}(H)$, thereby achieving  $\mathbf{0}\in \mathrm{int}(\mathrm{conv}(P''))$.

 \medskip\noindent \textbf{Case~2: the turning angle of $P'$ is 0 at some vertex $\mathbf{u}_i$.}
 By supressing the vertex $\mathbf{u}_i$, we obtain a spherical polygon $Q'$ on $n-1$ vertices that realizes the sequence 
 $(\alpha_0,\ldots, \alpha_{i-2},\alpha_{i-1}+\alpha_{i},\alpha_{i+1},\ldots, \alpha_{n-1})$ unless
 $\alpha_{i-1}+\alpha_{i}\ge \pi$, but then we are in Case 1.
 By induction, this sequence has a realization $Q''$ such that $\mathbf{0}\in \mathrm{relint}(\mathrm{conv}(Q''))$. Subdivision of the edge of length $\alpha_{i-1}+\alpha_{i}$ producers a realization $P''$ of $A$ 
 such that $\mathbf{0}\in \mathrm{relint}(\mathrm{conv}(Q''))= \mathrm{relint}(\mathrm{conv}(P''))$.

 \medskip\noindent\textbf{Case~3: there is no path of consecutive edges lying in a great circle and  containing a half-circle, and no turning angle is 0.}

 \medskip\noindent\textbf{Case~3.1: $n= 4$.}  We claim that $U_0^{2+} \cap U_0^{2-}$ contains
 $B_0^{2-}$ or $B_0^{2+}$. By Observation~\ref{obs:perturb}, this immediately implies that 
 we can change one turning angle to 0 and proceed to Case~1. 

 To prove the claim, note that 
 $U_0^{2+} \cap U_0^{2-}\neq \emptyset$ and $-2 \equiv 2 \pmod 4$,  and hence the circles $T_0^{2-}$, $T_0^{2+}$, $B_0^{2-}$, and $B_0^{2+}$ are all parallel since they are all orthogonal to ${\bf u}_{2}$. Thus, by symmetry there are two cases to consider depending on whether $U_0^{2+} \subseteq U_0^{2-}$.
  If $U_0^{2+} \subseteq U_0^{2-}$, then $B_0^{2+}\subset U_0^{2+} \cap U_0^{2-}$.
  Else  $U_0^{2+} \cap U_0^{2-}$ contains $B_0^{2+}$ or $B_0^{2-}$, whichever is closer to ${\bf u}_{2}$, which concludes the proof of this case.

 \medskip\noindent\textbf{Case~3.2: $n\ge 5$.} 
  Choose $i\in\{0,\ldots, n-1\}$ so that $\alpha_{i+2}$ is a minimum angle in $A$.
 Note that $U_i^{2+}$ is neither a circle nor a point since that would mean that $\mathbf{u}_{i+2}$ and $\mathbf{u}_{i+1}$, or $\mathbf{u}_{i}$ and $\mathbf{u}_{i+1}$ are  antipodal, which is impossible.
 We apply Lemma~\ref{lemma:straigthening} and obtain a spherical polygon   
\[
P''=({\bf u}_0,\ldots, {\bf u}_{i-1},{\bf u}_i',{\bf u}_{i+1}',{\bf u}_{i+2},\ldots, {\bf u}_{n-1}).
\]
  If the turning angle of $P''$ at  ${\bf u}_{i-1}$ or  ${\bf u}_{i+1}'$ equals to 0, we proceed to Case~2. Otherwise, the turning angle of $P''$ at ${\bf u}_{i+1}'$ equals $\pi$. In other words, we introduce a spur at ${\bf u}_{i+1}'$.
  If $\alpha_{i+1}=\alpha_{i+2}$ we can make the turning angle of $P''$ at ${\bf u}_{i+2}$ equal to $0$ by rotating the overlapping segments $({\bf u}_{i+1}',{\bf u}_{i+2})$ and $({\bf u}_{i+1}',{\bf u}_{i}')$ around ${\bf u}_{i+2}={\bf u}_i'$ and proceed to Case~2.
 Otherwise, we have $\alpha_{i+2}<\alpha_{i+1}$ by the choice of $i$. Let $Q'$ denote an auxiliary polygon realizing  $(\alpha_0,\ldots, \alpha_{i},\alpha_{i+1}-\alpha_{i+2},\alpha_{i+3},\ldots, \alpha_{n-1})$. We construct $Q'$ from $P''$ by cutting off the overlapping segments  $({\bf u}_{i+1}',{\bf u}_{i+2})$ and $({\bf u}_{i+1}',{\bf u}_{i}')$. We apply Lemma~\ref{lemma:straigthening} to $Q'$ thereby obtaining another realization
\[
Q''=({\bf u}_0,\ldots, {\bf u}_{i-1},{\bf u}_i'',{\bf u}_{i+1}'',{\bf u}_{i+3},\ldots, {\bf u}_{n-1}).
\]
   We re-introduce the cut off part to $Q''$ at ${\bf u}_{i+1}''$ as an extension of length $\alpha_{i+2}$  of the segment ${\bf u}_i''{\bf u}_{i+1}''$, whose length in $Q''$ is $\alpha_{i+1}-\alpha_{i+2}>0$, in order to recover a realization of $A$ by the following polygon 
\[
R'=({\bf u}_0,\ldots, {\bf u}_{i-1},{\bf u}_i'',{\bf u}_{i+1}'',{\bf u}_{i+2}'',{\bf u}_{i+3},\ldots, {\bf u}_{n-1}).
\]
 If the turning angle of $Q''$ at ${\bf u}_{i-1}$ equals 0, the same holds for $R'$ and we proceed to Case~2. If the turning angle of $Q''$ at ${\bf u}_{i+1}''$ equals $\pi$, then the turning angle of $R'$ at ${\bf u}_{i+1}''$ equals 0 and we  proceed to Case~2. Finally, if the turning angle of $Q''$ at ${\bf u}_{i+1}''$ equals 0, then $R'$ has a pair of consecutive spurs at ${\bf u}_{i+1}''$ and ${\bf u}_{i+2}''$, that is, a so-called ``crimp.''   We may assume w.l.o.g.\ that  $\alpha_{i+3}<\alpha_{i+1}$. Also we assume that the part $({\bf u}_i'',{\bf u}_{i+1}'',{\bf u}_{i+2}'',{\bf u}_{i+3})$ of $R'$ does not contain a pair of antipodal points, since otherwise we proceed to Case~1.
  Since $({\bf u}_i'',{\bf u}_{i+1}'',{\bf u}_{i+2}'',{\bf u}_{i+3})$ does not contain a pair of antipodal points, $|({\bf u}_i'',{\bf u}_{i+3})|=\alpha_{i+1}+\alpha_{i+3}-\alpha_{i+2}$.
  It follows that

\vspace{-\baselineskip}% Remove when line numbers are no longer needed.
\begin{align*}
 |({\bf u}_i'',{\bf u}_{i+3})|+|({\bf u}_i'',{\bf u}_{i+1}'')|+|({\bf u}_{i+1}''{\bf u}_{i+2}'')|+|({\bf u}_{i+2}'',{\bf u}_{i+3})| & =\\
 \alpha_{i+1}+\alpha_{i+3}-\alpha_{i+2}+ \alpha_{i+1}+\alpha_{i+2}+\alpha_{i+3} & = 2(\alpha_{i+1}+\alpha_{i+3}).
\end{align*}
 
If $\alpha_{i+3}+\alpha_{i+1}< \pi$, then the 3 angles 
$\alpha_{i+1}$, $\alpha_{i+2}+\alpha_{i+3}$, and $|({\bf u}_i'',{\bf u}_{i+3})|$ are all  less than $\pi$.
  Moreover, their sum, which is equal to $2(\alpha_{i+3}+\alpha_{i+1})$, is less than $2\pi$, and they satisfy the triangle inequalities. 
  Therefore we can turn the angle at ${\bf u}_{i+2}''$ to 0, by replacing the path  $({\bf u}_i'',{\bf u}_{i+1}'',{\bf u}_{i+2}'',{\bf u}_{i+3})$ on $R'$ by
  a pair of segments of lengths $\alpha_{i+1}$ and $\alpha_{i+2}+\alpha_{i+3}$.
  
Otherwise, $\alpha_{i+3}+\alpha_{i+1}\ge \pi$, and thus,
\[
|({\bf u}_i'',{\bf u}_{i+3})|+|({\bf u}_i'',{\bf u}_{i+1}'')|+|({\bf u}_{i+1}''{\bf u}_{i+2}'')|+|({\bf u}_{i+2}'',{\bf u}_{i+3})|\ge 2\pi.
\]
  In this case, we can apply the induction hypothesis to the closed spherical polygon  
  $({\bf u}_i'',{\bf u}_{i+1}'',{\bf u}_{i+2}'',{\bf u}_{i+3})$. In the resulting realization $S'$, that is w.l.o.g. fixing ${\bf u}_i''$ and ${\bf u}_{i+3}$, we replace the segment $({\bf u}_i'',{\bf u}_{i+3})$ by the remaining part of $R'$ between ${\bf u}_i''$ and ${\bf u}_{i+3}$.
   Let $R''$ denote the resulting realization of $A$.
   If $S'$ is not contained in a great circle then $\mathbf{0}\in \mathrm{int}(\mathrm{conv}(S'))\subseteq \mathrm{int}(\mathrm{conv}(R''))$, and we are done. Otherwise, $S'\setminus ( {\bf u}_{i+3},{\bf u}_{i})$ contains a pair of antipodal points on a half-circle. The same holds for $R''$, and  we proceed to Case~1, which concludes the proof.
 \end{proof}
 
The combination of Theorem~\ref{thm:3d_realization_sphere} with Lemmas~\ref{lemma:3d_realization_char}--\ref{lemma:convex-hull} yields Theorems~\ref{thm:3d_realization_char} and~\ref{thm:3d_realization}. 
The  proof of Lemma~\ref{lemma:convex-hull} can be turned into an algorithm with running time polynomial in $n$ if we assume that every arithmetic operation can be carried out in $O(1)$ time. Nevertheless, we get only a weakly polynomial running time, since we are unable to guarantee a polynomial size encoding of the numerical values that are computed in the process of constructing a spherical polygon realizing $A$
that contains $\mathbf{0}$ in its convex hull in the proof of Lemma~\ref{lemma:convex-hull}.

\section{Crossing Free Realizations in 3D}
\label{sec:3Dcross}

It is perhaps surprising that in $\mathbb{R}^3$ not all realizable angle sequences can be realized  without a crossing. 
%\csaba{Does the proof go through in $\mathbb{R}^d$ for all $d\geq 3$?}
%\rado{I still think that pretty much every claim we make for 3D works in any dimension, or we can just say that wlog we can  restrict ourselves to a 3-dim subspace, since the locus of the vertices of spherical polygonal lines with one fixed vertex ${\bf u}_i$ and prescribed spherical edge lengths is a zone bounded by a pair of parallel hyperplanes with normals ${\bf u}_i$. The corresponding interval $I$ of angles  depends only on the angles in the sequence. Shall we include a formal proof of this? WLOG ${\bf u}_i=(0,\ldots, 0,1)$ and ${\bf u}_{i+1}=(0,\ldots,0, \sqrt{1-\cos^2 \alpha_i},\cos \alpha_i)$. Then the last coordinate $x$ of ${\bf u}_{i+2}=(\ldots, y,x)$ (encoding  its distance from ${\bf u}_i$) is such that $x\cos \alpha_i+y\sqrt{1-\cos^2 \alpha_i}=\cos \alpha_{i+1}$, for any $0\le y \le \sqrt {1-x^2}$. Hence, $x\in I(\alpha_i,\alpha_{i+1})$, where $I$ is the same in any dimension. Therefore the algorithm in Theorem~\ref{thm:3d_realization_sphere} outputs yes regardless of the dimension, and thus, we can pass to 3D.}
%\csaba{Yes, I think we should say in the Introduction (and then repeat in the Conclusions) that every claim we make for $\mathbb{R}^3$ generalizes to $\mathbb{R}^d$ for all $d\geq 3$. I would not include a proof, because it would look like hand-waving.}
%\rado{Ok. I wrote a paragraph in the introduction and mentioned a rough reason in the conclusion.}
The following theorem identifies some angle sequences for which this is the case. They correspond exactly to sequences realizable as a standard musquash~\cite{mushquash}, see Fig.~\ref{fig:mushquash} for an illustration, which is a \emph{thrackle}, that is, a polygon in which every pair of nonadjacent edges cross each other. 
%Here, we keep all the angles in $A$ positive.
%\csaba{Positive angles don't make sense at this point. We have disallowed negative angles in 3D.}

\begin{figure}[htb]
\centering
\includegraphics[scale=0.7]{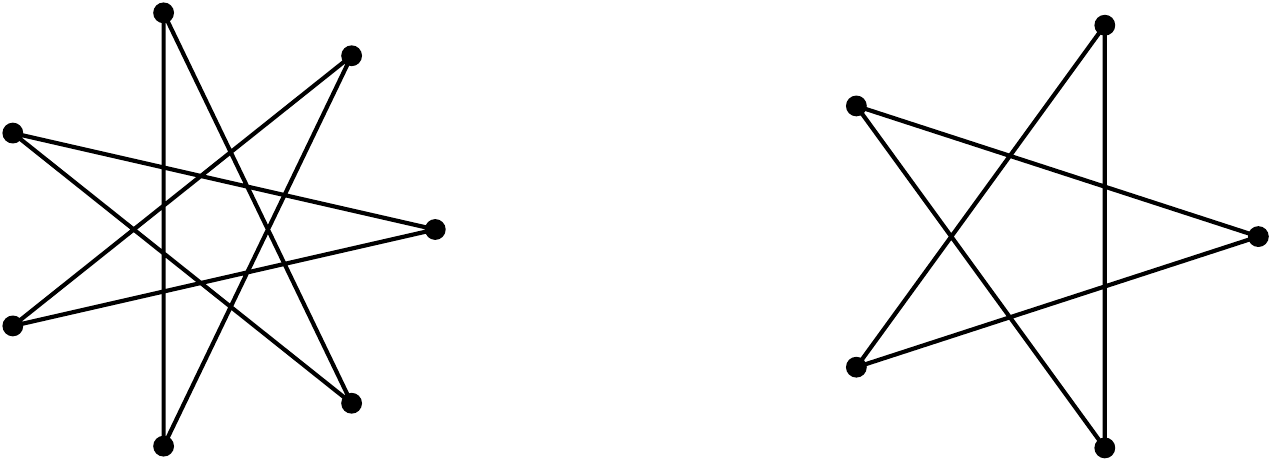}
\caption{Standard musquash with 7 (left) and 5 (right) vertices.}
\label{fig:mushquash}
\end{figure}

\begin{theorem}\label{thm:only_if}
Let  $A=(\alpha_0, \ldots, \alpha_{n-1})$ be an angle sequence, where $n\ge 5$ is odd,  $\sum_{i=0}^{n-1}(\pi-\alpha_i) = \pi$ and $\alpha_i\in(0,\pi)$ for all $i\in \{0,\ldots ,n-1\}$. Then $A$ is realizable in $\mathbb{R}^3$ but every realization lies in an affine plane and has a self-intersection. 
%\caaba{I think we need to state that a 3D realization exists, otherwise it might be a statement about the emptyset.}
% An angle sequence $A=(\alpha_0, \ldots, \alpha_{n-1})$, where $\alpha_i\in(0,\pi)$ and $n\ge 4$, that can be realized by a polygon in $\mathbb{R}^3$, has no realization by a self-intersection free polygon in $\mathbb{R}^3$  if  $n$ is odd and $\sum_{i=0}^{n-1}(\pi-\alpha_i) = \pi$. 
\end{theorem}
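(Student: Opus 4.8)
The plan is to prove the three assertions—realizability, planarity of \emph{every} realization, and unavoidability of a self‑intersection—separately, with the planarity claim carrying the main idea.

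First I would record the arithmetic consequence of the hypothesis: $\sum_{i}(\pi-\alpha_i)=\pi$ is equivalent to $\sum_{i}\alpha_i=(n-1)\pi$. Since $n$ is odd, $n-1$ is even, so $\sum_i\alpha_i$ is an even multiple of $\pi$; in particular $\sum_i\alpha_i\equiv 0\pmod{2\pi}$ and $\sum_i\alpha_i\ge 4\pi>2\pi$. For realizability I would then exhibit a spherical polygon realizing $A$ that lies on a single great circle $C\cong\mathbb{S}^1$: place $\mathbf{u}_i$ at cumulative angle $\sum_{j\le i}\alpha_j$, traversing $C$ always in the same direction. Each arc has length $\alpha_i<\pi$, so it is a legal spherical segment, and the polygon closes because $\sum_i\alpha_i\equiv 0\pmod{2\pi}$. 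Combined with $\sum_i\alpha_i\ge 2\pi$, Theorem~\ref{thm:3d_realization_char} then yields a realization $P\subset\mathbb{R}^3$.

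The heart of the argument is showing that every realization is planar, i.e.\ that the edge‑direction vectors $\mathbf{u}_0,\dots,\mathbf{u}_{n-1}$ (which form a spherical polygon $P'$ realizing $A$) lie on a common great circle. I would introduce the alternating reflection $\mathbf{w}_i=(-1)^i\mathbf{u}_i$. Since $\langle\mathbf{u}_{i-1},\mathbf{u}_i\rangle=\cos\alpha_i$, a short computation gives $\langle\mathbf{w}_{i-1},\mathbf{w}_i\rangle=-\cos\alpha_i=\cos(\pi-\alpha_i)$, so consecutive $\mathbf{w}$'s are at spherical distance $\pi-\alpha_i$. Because $n$ is odd, $\mathbf{w}_n=(-1)^n\mathbf{u}_0=-\mathbf{w}_0$, so $(\mathbf{w}_0,\dots,\mathbf{w}_{n-1},-\mathbf{w}_0)$ is a spherical path joining the antipodal pair $\mathbf{w}_0,-\mathbf{w}_0$ whose total length is $\sum_i(\pi-\alpha_i)=\pi=d(\mathbf{w}_0,-\mathbf{w}_0)$. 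A concatenation of geodesic arcs whose length equals the distance between its endpoints is a minimizing geodesic, hence a half great circle; consequently all $\mathbf{w}_i$—and therefore all $\mathbf{u}_i=(-1)^i\mathbf{w}_i$, since a great circle is antipodally symmetric—lie on one great circle $C$, so $P$ is planar. This minimizing‑geodesic step, together with the parity bookkeeping that makes the endpoints antipodal, is where I expect the only real subtlety: one must argue that no backtracking occurs (each $\pi-\alpha_i>0$ while the total is exactly $\pi$), which forces the $\mathbf{w}_i$ to appear monotonically along the half‑circle.

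Finally, to force a self‑intersection I would read off the turning number of the planar polygon $P$ from this rigid picture. Parametrizing $C$ by angle and using the monotonicity of the $\mathbf{w}_i$ together with $\mathbf{u}_i=(-1)^i\mathbf{w}_i$, each signed turning angle of $P$ works out to $-\alpha_i$: the $(-1)^i$ offsets contribute $\pm\pi$ at each consecutive step, and the seam between edges $n-1$ and $0$ closes consistently precisely because $n-1$ is even. Hence the total signed curvature is $-(n-1)\pi$ and the turning number is $-(n-1)/2$, of absolute value $(n-1)/2\ge 2$ for $n\ge 5$. Since a crossing‑free simple closed polygon has turning number $\pm 1$—as in the base case of Lemma~\ref{lemma:0sum}—no realization can be simple, so every realization has a self‑intersection. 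Because the great‑circle structure is forced for \emph{every} realization, this conclusion is uniform over all realizations, completing the proof.
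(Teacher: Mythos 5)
Your proof is correct, but the central planarity step takes a genuinely different route from the paper. The paper permutes the spherical polygon $P'=(\mathbf{u}_0,\ldots,\mathbf{u}_{n-1})$ into the ``skip-one'' polygon $Q'=(\mathbf{u}_0,\mathbf{u}_2,\ldots,\mathbf{u}_{n-1},\mathbf{u}_1,\ldots)$ (a single cycle because $n$ is odd), bounds each chord by $|(\mathbf{u}_i,\mathbf{u}_{i+2})|\le(\pi-\alpha_{i+1})+(\pi-\alpha_{i+2})$ so that $Q'$ has length at most $2\pi$, and then invokes Fenchel's theorem together with Lemma~\ref{lemma:3d_realization_char} to force $Q'$, hence $P'$, onto a great circle. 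Your substitution $\mathbf{w}_i=(-1)^i\mathbf{u}_i$ is essentially the ``unfolded'' version of the same idea (the paper's chord bound is exactly the triangle inequality through $-\mathbf{u}_{i+1}=\mathbf{w}_{i+1}$ up to sign), but it replaces Fenchel's theorem by the elementary fact that a concatenation of arcs of total length $\pi$ joining antipodal points is a single half great circle; both arguments hinge on the parity of $n$ in the same way ($\mathbf{w}_n=-\mathbf{w}_0$ versus $Q'$ being one cycle). The one place you should add a line is the geodesic step: for antipodal endpoints the identity $d(p,x)+d(x,-p)=\pi$ holds for \emph{every} $x\in\mathbb{S}^2$, so the naive ``equality in the triangle inequality at each intermediate point'' gives nothing; instead, take the point $r$ at arc length $\pi/2$ along the $\mathbf{w}$-path, note that the two sub-paths $p\to r$ and $r\to -p$ are each minimizing between non-antipodal points and hence are the unique minor arcs on the common great circle through $p$, $r$, $-p$. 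With that patched, your derivation that every signed turning angle equals $-\alpha_i$ (or, with the opposite orientation of the circle, $+\alpha_i$), giving turning number of absolute value $(n-1)/2\ge 2$ and hence a self-crossing via Lemma~\ref{lemma:0sum}, matches the paper's endgame. A further small advantage of your write-up is that you explicitly establish realizability by exhibiting the great-circle spherical polygon and citing Theorem~\ref{thm:3d_realization_char}, whereas the paper leaves that direction to the musquash example.
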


\begin{proof}
Let $P'=({\bf u}_0,\ldots, {\bf u}_{n-1})$ be a spherical realization of $A$
corresponding to a realization $P$ in $\mathbb{R}^3$, such that $n\ge 5$ is odd and  $\sum_{i=0}^{n-1}(\pi-\alpha_i) = \pi$.
We permute the vertices of the polygon $P'$ thereby obtaining an auxiliary spherical polygon $Q'=({\bf u}_0,{\bf u}_2, \ldots,{\bf u}_{n-1},{\bf u}_1,{\bf u}_3, \ldots,{\bf u}_0)$. The spherical polygon $Q'$ is well defined since $n$ is odd.

Note that (assuming modulo $n$ arithmetic on the indices) the spherical distance between ${\bf u}_i$ and ${\bf u}_{i+2}$ is bounded above by 
\begin{equation}
\label{eqn:upperii}
    |({\bf u}_i,{\bf u}_{i+2})| \le \pi-\alpha_i+\pi-\alpha_{i-1},
\end{equation} 
for every $i=0,\ldots, n-1$. Indeed,~\eqref{eqn:upperii} is vacuously true if $\alpha_{i}+\alpha_{i+1}\le \pi$. (Recall the definition of $U_i^{2+}$ from Section~\ref{sec:3D}.) Otherwise, $\pi-\alpha_i+\pi-\alpha_{i-1}$ is the spherical distance of any point on $T_i^{2-}$ to ${\bf u}_i$ which is also the farthest possible distance of ${\bf u}_{i+2}$ from ${\bf u}_i$. By~\eqref{eqn:upperii}, the total spherical length of the polygon  $Q'$
is at most $2(n\pi - \sum_{i=0}^{n-1} \alpha_i) =  2\pi$. 

It follows, by applying Fenchel's theorem to $Q'$, that the length of $Q'$ is $2\pi$.
By Lemma~\ref{lemma:3d_realization_char}, we conclude that $Q'$, and thus, also $P'$ are contained in a great circle, which we can assume to be the equator. 
Due to its length, $Q'$ has no self-intersections.
(The polygon $Q$ is in fact convex, but we do not use this in what follows.)
 
Since $P'$ lies in a plane, $P$ also lies in a plane and realizes a signed version
$A^{\pm}$ of the original angle sequence $A$.
As inequality~\eqref{eqn:upperii} must hold with equality due to the length of $Q'$, for all $i=0,\ldots, n$, the all angles in $A^{\pm}$ have the same sign. 
We may assume w.l.o.g.\ that $\alpha_i>0$ for all $i$.
Note that $\sum_{i=0}^{n-1} \alpha_i=(n-1)\pi\ge 4\pi$ by the hypothesis of the theorem. 
Thus, by Theorem~\ref{thm:crossings_number}, the polygon $P$ must have a self-crossing.
\end{proof}

\section{Conclusions}
\label{sec:conclusion}

We devised  efficient algorithms to realize a consistent angle sequence
with the minimum number of crossings in 2D.
%In 3D, our efficient algorithm only tests whether a given sequence is realizable.
%The existence of an efficient algorithm to construct a realization with the minimum number of crossings in 3D is therefore left as an open problem. 
In 3D, we can test efficiently whether a given angle sequence is realizable, and find a realization if one exists. 
Every claim we make for $\mathbb{R}^3$ generalizes to $\mathbb{R}^d$, for all $d\ge \mathbb{R}^d$.
The reason is that the circular arcs $I_i$ constructed during an execution of the algorithm in the proof of Theorem~\ref{thm:3d_realization_sphere} depend only on the angles in the sequence, and  would be the same in any higher dimension.

However, it remains an open problem to find an efficient algorithms that computes the minimum number of crossings in generic realizations. 
As we have seen in Section~\ref{sec:3Dcross}, there exist consistent angle sequences in 3D for which every generic realization has crossings.
It is not difficult to see that crossings are unavoidable only if every 3D realization of an angle sequence $A$ is contained in a plane, which is the case, for example, when $A=(\pi-\varepsilon,\ldots, \pi-\varepsilon,(n-1)\varepsilon)$, for odd $n\geq 5$ which is the length of $A$. Thus, an efficient algorithm for this problem would follow by Theorem~\ref{thm:crossings_number}, once one can test efficiently whether $A$ admits a fully 3D realization. 
The evidence that we have points to the following conjecture 
that the converse of Theorem~\ref{thm:only_if} also holds.

\begin{conjecture}\label{con:1}
An angle sequence $A=(\alpha_0, \ldots, \alpha_{n-1})$, where $\alpha_i\in(0,\pi)$ and $n\ge 4$, that can be realized by a polygon in $\mathbb{R}^3$, has a realization by a self-intersection free polygon in $\mathbb{R}^3$ if and only if  $n$ is even or $\sum_{i=0}^{n-1}(\pi-\alpha_i)\not = \pi$. 
\end{conjecture}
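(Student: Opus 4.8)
The plan is to prove the three assertions in order: first that $A$ is realizable in $\mathbb{R}^3$, then that every realization is forced to be planar, and finally that every (necessarily planar) realization must self-intersect. The hypothesis can be rewritten as $\sum_{i=0}^{n-1}\alpha_i=(n-1)\pi=2k\pi$ with $k=(n-1)/2$, which is an integer since $n$ is odd and satisfies $k\ge 2$ since $n\ge 5$. Realizability is then immediate: all $\alpha_i>0$ and $\sum_i\alpha_i=2k\pi$ with $k\ne 0$, so the construction behind Lemma~\ref{lemma:0sum2} yields a planar polygon realizing $A$, which is a fortiori a realization in $\mathbb{R}^3$.

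For the forced planarity I would start from an arbitrary realization $P$, pass to its spherical direction polygon $P'=(\mathbf{u}_0,\ldots,\mathbf{u}_{n-1})$, and exploit that $n$ is odd by reconnecting the same points in the order $\mathbf{u}_0,\mathbf{u}_2,\mathbf{u}_4,\ldots$; since $\gcd(2,n)=1$ this step-by-two tour visits every vertex and produces a closed spherical polygon $Q'$ whose edges are the segments $(\mathbf{u}_i,\mathbf{u}_{i+2})$. The key estimate is that a two-edge spherical linkage of lengths $\alpha_{i+1},\alpha_{i+2}$ joins its endpoints at spherical distance at most $\min\{\alpha_{i+1}+\alpha_{i+2},\,2\pi-\alpha_{i+1}-\alpha_{i+2}\}$; because $(\pi-\alpha_{i+1})+(\pi-\alpha_{i+2})\le\sum_j(\pi-\alpha_j)=\pi$ and every $\alpha_j<\pi$, one has $\alpha_{i+1}+\alpha_{i+2}>\pi$ strictly, so this distance is exactly $(\pi-\alpha_{i+1})+(\pi-\alpha_{i+2})$. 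Summing over all $i$ gives $\mathrm{length}(Q')\le 2\sum_{i}(\pi-\alpha_i)=2\pi$.

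To upgrade this to an equality I would invoke Fenchel from below. Since $Q'$ has the same vertex set as $P'$, the two convex hulls coincide, and Lemma~\ref{lemma:3d_realization_char} applied to $P$ gives $\mathbf{0}\in\mathrm{relint}(\mathrm{conv}(P'))=\mathrm{relint}(\mathrm{conv}(Q'))$; hence $Q'$ is the direction polygon of some closed polygon $Q\subset\mathbb{R}^3$, whose total curvature equals $\mathrm{length}(Q')$. Fenchel's theorem forces $\mathrm{length}(Q')\ge 2\pi$, so $\mathrm{length}(Q')=2\pi$, and the equality (planarity) case of the polygonal Fenchel theorem makes $Q$ planar. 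Then all $\mathbf{u}_i$ lie on one great circle, so $P'$, and therefore $P$ itself, lies in an affine plane.

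Finally, for the self-intersection I would read the sign pattern off the saturated estimates: $\mathrm{length}(Q')=2\pi$ forces each of the above distance bounds to be an equality, and on a great circle equality in the two-edge extremal distance means the arcs $(\mathbf{u}_i,\mathbf{u}_{i+1})$ and $(\mathbf{u}_{i+1},\mathbf{u}_{i+2})$ are traversed in the same rotational sense for every $i$; this is exactly the statement that the planar polygon $P$ realizes a signed sequence $A^{\pm}$ with all signs equal. Normalizing to $\alpha_i>0$, the signed total curvature is $(n-1)\pi=2k\pi$ with $k\ge 2$, so $P$ cannot be simple (a simple closed polygon has turning number $\pm1$), and in fact the lower bound of Theorem~\ref{thm:crossings_number} forces at least $|k|-1\ge1$ crossings. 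I expect the main obstacle to be precisely this equality analysis: verifying the extremal two-edge distance and, more delicately, arguing that tightness of every single bound pins down a uniform turning sense, since the whole planarity conclusion rests on matching the combinatorial upper bound $2\pi$ against Fenchel's geometric lower bound.
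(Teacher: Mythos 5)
The statement you were asked to prove is an if-and-only-if, and your argument establishes only one of the two implications. What you prove is: if $n$ is odd and $\sum_{i=0}^{n-1}(\pi-\alpha_i)=\pi$, then every realization of $A$ in $\mathbb{R}^3$ is planar and self-intersecting, so no crossing-free realization exists. This is the contrapositive of the ``only if'' direction, and it coincides almost exactly with the paper's proof of Theorem~\ref{thm:only_if}: the step-by-two reordering of the direction polygon $P'$ into $Q'$ (using that $n$ is odd), the two-edge distance bound $|(\mathbf{u}_i,\mathbf{u}_{i+2})|\le(\pi-\alpha_{i+1})+(\pi-\alpha_{i+2})$ summing to $2\pi$, the appeal to Fenchel's theorem to force equality and hence planarity, and the equality analysis showing all signed angles agree so that the turning number has absolute value $(n-1)/2\ge 2$, whence Theorem~\ref{thm:crossings_number} (or Lemma~\ref{lemma:0sum}) forces a crossing. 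That half of your write-up is sound, and in a couple of places (justifying that $Q'$ is itself the direction polygon of a closed space polygon via $\mathrm{conv}(Q')=\mathrm{conv}(P')$ and Lemma~\ref{lemma:3d_realization_char}, and noting $\alpha_{i+1}+\alpha_{i+2}>\pi$ strictly) you are more explicit than the paper.

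The genuine gap is the converse implication: if $A$ is realizable in $\mathbb{R}^3$ and $n$ is even or $\sum_{i=0}^{n-1}(\pi-\alpha_i)\neq\pi$, then $A$ admits a \emph{self-intersection-free} realization in $\mathbb{R}^3$. Your proposal never addresses this, and it cannot be dismissed as routine: it is precisely the part the authors leave open, which is why the statement appears in the paper as Conjecture~\ref{con:1} rather than as a theorem (the paper explicitly describes it as the claim that ``the converse of Theorem~\ref{thm:only_if} also holds''). Proving it would require, for every realizable sequence outside the musquash-type family, either exhibiting a genuinely three-dimensional realization and perturbing away crossings, or handling the degenerate sequences whose every realization is planar; neither mechanism is supplied by your argument or by anything in the paper. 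As written, your proposal is a correct proof of Theorem~\ref{thm:only_if} but not a proof of the stated biconditional.
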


It can be seen that Conjecture~\ref{con:1} is equivalent to the claim that every realization $A$ in $\mathbb{R}^3$ has a self-intersection if and only if $A$ can be realized in $\mathbb{R}^2$ as a \emph{thrackle}.

Can our results in $\mathbb{R}^2$ or $\mathbb{R}^3$ be  extended to   broader interesting classes of graphs?
A natural analog of our problem in $\mathbb{R}^3$ would be a construction of triangulated spheres with prescribed dihedral angles, discussed in a recent paper by Amenta and Rojas~\cite{AmentaR20}. For convex polyhedra, Mazzeo and Montcouquiol~\cite{MM11} proved, settling Stoker's conjecture, that dihedral angles determine face angles. 

% Theorem~\ref{thm:3d_realization_sphere} gave an efficient algorithm to test whether a given angle sequence $A$ can be realized by a spherical polygon $P'\subset \mathbb{S}^2$. 
% We wonder whether every realizable sequence $A$ has a noncrossing realization, or
% possibly a noncrossing realization whose convex hull contains the origin (when $\sum_{i=0}^{n-1}\alpha_i\geq 2\pi$). If the answer is positive, 
% can such realizations be computed efficiently? 
% We do not know whether a realization $P\subset \mathbb{R}^3$ corresponding to a spherical realization $P'\subset \mathbb{S}^2$ (according to the method in the proof of Lemma~\ref{lemma:3d_realization_char}) has any interesting properties when $P'$  has no self-intersections.

\bibliographystyle{plain}
\bibliography{references}

\end{document}